\newcolumntype{P}[1]{>{\centering\arraybackslash}p{#1}}
\newcommand{\ket}[1]{|#1\rangle}
\newcommand{\cA}{\mathcal{A}}
\newcommand{\cB}{\mathcal{B}}
\newcommand{\cC}{\mathcal{C}}
\newcommand{\cD}{\mathcal{D}}
\newcommand{\cG}{\mathcal{G}}
\newcommand{\cH}{\mathcal{H}}
\newcommand{\cM}{\mathcal{M}}
\newcommand{\cR}{\mathcal{R}}
\newcommand{\cT}{\mathcal{T}}
\newcommand{\cX}{\mathcal{X}}
\newcommand{\sC}{\mathsf{C}}
\newcommand{\sD}{\mathsf{D}}
\newcommand{\sQ}{\mathsf{Q}}
\newcommand{\sR}{\mathsf{R}}
\newcommand{\set}[1]{\left\{ #1 \right\}}
\newcommand{\rbra}[1]{\left(#1\right)}
\newcommand{\cbra}[1]{\left\{#1\right\}}
\newcommand{\zone}{\set{0,1}}
\newtheorem{theorem}{Theorem}[section]
\newtheorem{lemma}[theorem]{Lemma}
\title{Complexity of learning matchings and half graphs via edge queries}
\author{Nikhil S.~Mande\thanks{University of Liverpool, UK {\tt nikhil.mande@liverpool.ac.uk}}
\and 
Swagato Sanyal\thanks{University of Sheffield, UK {\tt swagato.sanyal@sheffield.ac.uk}
}
\and
Viktor Zamaraev\thanks{University of Liverpool, UK {\tt viktor.zamaraev@liverpool.ac.uk}
}
}
\date{}
\begin{document}

\maketitle

\begin{abstract}
    The problem of learning or reconstructing an unknown graph from a known family via partial-information queries arises as a mathematical model in various contexts.
    The most basic type of access to the graph is via \emph{edge queries}, where an algorithm may query the presence/absence of an edge between a pair of vertices of its choosing, at unit cost.

    While more powerful query models have been extensively studied in the context of graph reconstruction, the basic model of edge queries seems to have not attracted as much attention. In this paper we study the edge query complexity of learning a hidden bipartite graph, or equivalently its bipartite adjacency matrix, in the classical as well as quantum settings.
    We focus on learning matchings and half graphs, which are graphs whose bipartite adjacency matrices are a row/column permutation of the identity matrix and the lower triangular matrix with all entries on and below the principal diagonal being 1, respectively.
    \begin{itemize}
        \item For matchings of size $n$, we show a tight deterministic bound of $n(n-1)/2$ and an asymptotically tight randomized bound of $\Theta(n^2)$. A quantum bound of $\Theta(n^{1.5})$ was shown in a recent work of van Apeldoorn et al.~[ICALP'21].

        \item For half graphs whose bipartite adjacency matrix is a column-permutation of the $n \times n$ lower triangular matrix,
        we give tight $\Theta(n \log n)$ bounds in both deterministic and randomized settings, and an $\Omega(n)$ quantum lower bound. 
        We also observe that this learning problem is equivalent to sorting with threshold comparisons. 

        \item For general half graphs, 
        we observe that the problem is equivalent to a natural generalization of the famous nuts-and-bolts problem, leading to a tight $\Theta(n \log n)$ randomized bound.
        We also present a simple quicksort-style method that instantiates to a $O(n \log^2 n)$ randomized algorithm and a tight $O(n \log n)$ quantum algorithm.
    \end{itemize}
\end{abstract}

\newpage

\section{Introduction}\label{sec:intro}
A graph learning (also known as a graph reconstruction) problem is a formalization of the task of reconstructing a network from its partial observations. 
This task may arise when observing the entire network is costly or not possible, but knowledge about the network topology is desirable. For example, it appears in evolutionary biology, genetics, bioinformatics, telecommunication networks, and the theory of chemical reaction networks (see references below). A natural objective is to minimize the number of partial observations needed to reconstruct the graph.

Partial observations are usually modeled as queries that ask for some local or global information about parts of the graph. Different types of queries are motivated by different contexts. Perhaps the most extensively studied queries are:

\begin{itemize}%[topsep=3pt, itemsep=3pt, leftmargin=5pt]
    \item \emph{Edge detection queries}. For a given set of vertices of the graph, this query outputs 0 if there are no edges between any pair of vertices in the set and it outputs 1 otherwise. Such queries arise in the context of the theory of chemical reaction networks, where one has a set of chemicals some pairs of which may or may not react.
    The goal is to learn which pairs of chemicals react with each other by doing a small number of experiments each consisting of mixing a set of chemicals and observing the existence of a reaction.
    Many studies investigated this type of queries for reconstructing graphs from various graph classes;
    see, for example,~\cite{GK97, ABKRS04, BGK05, AC08, AB19} and  references therein.

    \item \emph{Edge counting queries}. For a given set of vertices, this query outputs the number of edges in the subgraph induced by the given set of vertices. It can be used to model genome sequencing via multiplex polymerase chain reaction (multiplex PCR). 
    See, for example,~\cite{GK00, BGK05, CK10} and references therein.

    \item \emph{Distance queries}. For a given pair of vertices, this query returns the length of a shortest path between the two vertices. This type of queries was introduced in~\cite{BEEHHMR06} to model the network topology discovery in the context of telecommunication or peer-to-peer networks when the entire network topology is not available, but node-to-node messages can be used to estimate distances. See, for example,~\cite{MZ13,KMZ15,RLYW21,MZ23,BG23,KZ24} and references therein.
\end{itemize}

\noindent
Further types of queries for reconstructing graphs from a given class that were studied in the literature include 
Parity queries (returns the parity of the number of edges in the subgraph induced by the input set of vertices) \cite{MS22} and
Maximal Independent Set queries (returns a maximal independent set in the subgraph induced by the input set of vertices) \cite{KOT24}.

Another possible type of query one may consider, and the one we study in this work, is \emph{edge queries}, which, given a pair of vertices, outputs whether or not they are adjacent. This is arguably the most basic model to consider. It is easy to see that each of the above-mentioned queries can implement an edge query by querying the underlying pair of vertices.
Therefore an algorithm in any of the above query models can learn a graph using the \emph{naive strategy} that makes ${n \choose 2}$ \emph{edge queries} to learn adjacencies between all pairs of vertices. 
All of these types of queries are powerful, witnessed by algorithms for reconstructing graphs from various graph classes, as shown in the studies cited above.
This raises the question of whether the generality of the more powerful queries is needed to outperform the naive strategy.
To examine this question we study the problem of learning graphs via edge queries in deterministic, randomized, and quantum settings.
We show that in some cases edge queries can be used to learn graphs much more efficiently than the naive strategy.

%%%%%%%%%%%%%%%%%%%%%%%%%%%%%%%%%%%%%%%%%%
\subsection{Our results, techniques, and comparison with prior work}
%%%%%%%%%%%%%%%%%%%%%%%%%%%%%%%%%%%%%%%%%%

We focus on edge query complexity of learning graphs from two specific classes of bipartite graphs: matchings and half graphs. We assume that the bipartition of a graph is given as input and we want to learn edges between the vertices in the two parts of the graph. Alternatively, we will see this problem as learning an $n \times n$ matrix from a given family of matrices that correspond to bipartite adjacency matrices of the graphs from a class.

To state our results we introduce some notation.
We will say that a bipartite graph $G = ([n],[n], E)$ is a \emph{matching} graph if the degree of every vertex in $G$ is exactly 1. Let $H=([n],[n], E)$ be a bipartite graph such that $(i,j) \in [n] \times [n]$ is an edge in $H$ if and only if $i \leq j$; a bipartite graph $G = ([n],[n], E)$ is a \emph{half graph} if it is isomorphic to $H$.
We denote by $I_n$ the $n \times n$ identity matrix and by $L_n$ the $n \times n$ lower triangular 0-1 matrix whose entries below or on the main diagonal are 1s and all other entries are 0s.
We denote by $\cM_n$ the family of matrices that are obtained from $I_n$ via row or column permutations; by $\cC_n$ we denote the family of matrices that are column permutations $L_n$, and by $\cH_n$ we denote the family of matrices that are obtained from $L_n$ via row or column permutations.
Interpreting matrices in these classes as bipartite adjacency matrices of bipartite graphs with bipartition consisting of two $n$-vertex sets, we will often think about $\cM_n$, $\cC_n$, and $\cH_n$ as the corresponding families of matching graphs, half graphs whose bipartite adjacency matrices are column permutations of $L_n$ (we refer to such graphs in $\cC_n$ as \emph{column-permuted} half graphs), and half graphs with arbitrary labelings of vertices in the both parts, respectively. 

For a family $\cX_n$ of bipartite graphs (equivalently, the family of their bipartite adjacency matrices), we also denote by $\sD(\cX_n)$, $\sR(\cX_n)$, $\sQ(\cX_n)$ the number of edge queries that any deterministic, randomized, and quantum algorithm, respectively, needs to make in the worst case to learn a graph from $\cX_n$. In the randomized and quantum settings, an algorithm is required to output the correct answer with probability at least $2/3$.
With this notation at hand, we are now ready to state and discuss our results.

%%%%%%%%%%%%%%%%%%%%%%%%%%%%%%%%%%%%%%%%%%%%%%%%%
\subsubsection{Matchings}
%%%%%%%%%%%%%%%%%%%%%%%%%%%%%%%%%%%%%%%%%%%%%%%%%
Below is our main result regarding the query complexities of learning hidden matchings.

\begin{restatable}{theorem}{mainResultMatching}
\label{thm-matching}
    For every natural $n$, we have
    $$
        \sD(\cM_n) = \frac{n(n - 1)}{2}, \qquad 
        \sR(\cM_n) = \Theta(n^2), \qquad
        \sQ(\cM_n) = \Theta(n^{1.5}).
    $$
\end{restatable}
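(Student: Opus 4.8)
The plan is to handle the three bounds separately, with the lower and upper bounds for each setting argued independently.

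\textbf{Deterministic.} For the upper bound, I would process the rows (or columns) one at a time: having identified the matched partner of rows $1,\dots,i-1$, the matched partner of row $i$ lies among the $n-(i-1)$ columns not yet used, and since the row has exactly one $1$, querying $n-i$ of those entries suffices — if all $n-i$ are $0$, the partner must be the last unqueried column. Summing $\sum_{i=1}^{n}(n-i) = n(n-1)/2$ gives the upper bound. For the matching lower bound I would use an adversary argument: maintain a partial matching consistent with the queries answered so far, answer every query $(i,j)$ with $0$ unless doing so would force row $i$ or column $j$ to have no available partner, and show that as long as fewer than $n(n-1)/2$ queries have been made there remain at least two completions. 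A clean way to do this: the adversary maintains a bipartite graph of ``still-possible'' edges; a $0$-answer deletes one possible edge; the algorithm cannot terminate until the possible-edge graph has a unique perfect matching, and a bipartite graph on $[n]\times[n]$ with a unique perfect matching has at most $\binom{n}{2}+n$ edges among the $n^2$ cells... I would instead count deleted cells: the algorithm must ``kill'' enough $0$-cells that only one perfect matching survives, and a counting/induction argument shows $\binom{n}{2}$ zero-answers are necessary in the worst case.

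\textbf{Randomized.} The upper bound $O(n^2)$ is immediate from the deterministic algorithm. For the $\Omega(n^2)$ lower bound I would invoke Yao's principle and put the uniform distribution on $\cM_n$ (uniformly random permutation matrix). Fix any deterministic algorithm making $q$ queries; I want to show $q = \Omega(n^2)$ for success probability $\geq 2/3$. The key observation is an information-theoretic or direct combinatorial one: after $q$ queries, if $q \le cn^2$ for small $c$, then with constant probability the set of unqueried cells still admits $\Omega(1)$ fraction of completions being consistent, because each query reveals at most one entry and a uniformly random permutation is ``spread out.'' Concretely, I would argue that conditioned on all answers being $0$ (which happens with probability roughly $\prod (1 - 1/(n-i+1))$-type bound, i.e.\ polynomially small — so this needs care), too many permutations remain; alternatively, and more robustly, use a partition/fooling argument: there are $n!$ inputs, each query has $2$ outcomes, but the real obstruction is that $1$-answers are rare. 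I expect the cleanest route is: the expected number of $1$-answers among the first $q$ queries is at most $q/n \cdot$ (something), and until the algorithm sees enough structure it cannot distinguish many permutations — formalized via an entropy argument showing the transcript has $o(\log n!)$ bits of information unless $q = \Omega(n^2)$. This is the step I expect to be the main obstacle: making the randomized lower bound tight rather than just $\Omega(n \log n)$.

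\textbf{Quantum.} The bound $\sQ(\cM_n) = \Theta(n^{1.5})$ is attributed in the excerpt to van Apeldoorn et al.; I would cite that work for both directions. For completeness I might sketch that the upper bound follows from a Grover-search-style subroutine — finding the $1$ in row $i$ among the $\le n$ candidate columns costs $O(\sqrt{n})$ queries, for a total of $O(n\sqrt{n})$ — and the lower bound follows from a reduction from $n$ independent search problems (or a single search over $n^2$ cells with $n$ marked items in a structured pattern) combined with the adversary method, yielding $\Omega(n^{1.5})$.

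Assembling the three matching upper and lower bounds yields the theorem.
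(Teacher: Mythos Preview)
Your deterministic upper bound matches the paper exactly. Your deterministic lower bound sketch is different from the paper's but salvageable: the fact you gesture at---a bipartite graph on $[n]\times[n]$ with a \emph{unique} perfect matching has at most $\binom{n+1}{2}$ edges---does hold, and combined with an all-$0$ adversary it yields the $\binom{n}{2}$ bound. The paper instead looks directly at the all-$0$ leaf of the decision tree, assumes (after relabeling) that the output there is $I_n$, and argues that if fewer than $\binom{n}{2}$ queries lie on that path then some $2\times 2$ principal submatrix is entirely unqueried, so swapping its diagonal with its antidiagonal gives a second consistent matching. Both work; the paper's is a bit more direct.

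The genuine gap is your randomized lower bound. You correctly identify it as ``the main obstacle'' and then do not close it. The Yao-plus-entropy route you outline is delicate: the naive information bound ($q$ Boolean answers carry at most $q$ bits, and $\log n! = \Theta(n\log n)$) only gives $\Omega(n\log n)$, and upgrading it to $\Omega(n^2)$ by exploiting that $1$-answers are rare requires a careful accounting of the conditional entropies $H(A_i\mid A_{<i})$ under an adaptive algorithm, which you do not supply. The paper bypasses all of this by invoking Aaronson's classical (relational) adversary bound: take $R(M,M')=1$ exactly when $M'$ is one column-swap away from $M$. Then for any $M$ and any cell $(i,j)$ with $M(i,j)=0$, there are $\binom{n}{2}$ partners of $M$ in total but only one partner that differs at $(i,j)$ (the unique column-swap that moves the $1$ in row $i$ into column $j$), so $\theta(M,(i,j))=\binom{n}{2}$ and $\sR(\cM_n)=\Omega(n^2)$ follows in one line. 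This is the missing idea in your proposal.

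Your quantum discussion is fine and aligns with the paper: cite van Apeldoorn et al.\ for the $\Omega(n^{1.5})$ lower bound, and note the upper bound is row-by-row Grover search.
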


The quantum query complexity bounds follow from a relatively recent result of van Apeldoorn et al.~\cite{AGLNWW21}.
We also note that the precise deterministic query complexity in Theorem \ref{thm-matching} is not very surprising as such a precise complexity of $n(n-1)$ is known for learning a matching in a very similar setting where the bipartition is not given \cite{Aig88}. 
We include a proof of our deterministic bound for completeness.

The randomized lower bound is new. While it seems intuitively correct, proving it from first principles is surprisingly non-trivial. We use a classical variant of Ambainis' adversary bound~\cite{Amb02} due to Aaronson~\cite{Aar04} to show this lower bound.

%%%%%%%%%%%%%%%%%%%%%%%%%%%%%%%%%%%%%%%%%%%%%%%%%
\subsubsection{Column-permuted half graphs}
%%%%%%%%%%%%%%%%%%%%%%%%%%%%%%%%%%%%%%%%%%%%%%%%%

Below is our main result regarding learning column-permuted half graphs.
\begin{restatable}{theorem}{mainResultColPermutedHalfGraphs}
\label{thm-col-permuted-half-graphs}
    For every natural $n$, we have
    $$
    \sD(\cC_n) = \Theta(n \log n), \qquad \sR(\cC_n) = \Theta(n \log n), \qquad \sQ(\cC_n) = \Omega(n), O(n \log n).
    $$
\end{restatable}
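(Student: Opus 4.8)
The plan is to route everything through the observation that learning $\cC_n$ \emph{is} the problem of sorting $n$ items with threshold comparisons. Each $M \in \cC_n$ is determined by the unique permutation $\pi$ of $[n]$ with column $c$ of $M$ equal to column $\pi(c)$ of $L_n$; since column $j$ of $L_n$ is the indicator of $\set{i : i \ge j}$, an edge query to position $(i,c)$ returns precisely $\mathbf{1}[\pi(c) \le i]$, i.e.\ a comparison of the rank $\pi(c)$ of item $c$ with the value $i$. Distinct $\pi$ give distinct matrices (the columns of $L_n$ have pairwise distinct Hamming weights), so $\abs{\cC_n} = n!$, and all four bounds can be read off from this formulation.

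For the upper bounds, I would binary-search, independently for each column $c$, over $i \in [n]$ for the threshold $\pi(c)$, using $\lceil \log_2 n \rceil$ edge queries per column; this reconstructs $\pi$ and hence $M$, giving $\sD(\cC_n) \le n\lceil \log_2 n\rceil = O(n\log n)$ and therefore also $\sR(\cC_n), \sQ(\cC_n) = O(n\log n)$. For the classical lower bound I would establish $\sR(\cC_n) = \Omega(n\log n)$, which subsumes the deterministic one: running any algorithm that makes at most $q$ queries and is correct with probability $\ge 2/3$ against a uniformly random hidden matrix, and fixing the best internal randomness, produces a deterministic depth-$q$ decision tree correct on at least $\tfrac23\abs{\cC_n}$ matrices; such a tree has at most $2^q$ leaves, each labelled by a single output matrix and thus correct on at most one of the matrices consistent with its root-to-leaf path, so $2^q \ge \tfrac23 n!$ and $q = \Omega(n\log n)$. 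This is just the information-theoretic sorting bound, carried over to bounded error.

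For the quantum lower bound $\sQ(\cC_n) = \Omega(n)$ I would embed the task of learning an arbitrary $(n/2)$-bit string (for $n$ even). Partition the columns into consecutive pairs $\set{2k-1,2k}$, $k\in[n/2]$, and for $y \in \{0,1\}^{n/2}$ let $M(y) \in \cC_n$ be the matrix whose permutation $\pi_y$ acts on each pair $\set{2k-1,2k}$ as the identity if $y_k = 0$ and as the swap if $y_k = 1$. A short case check shows that an edge query to $M(y)$ at $(i,c)$, writing $k = \lceil c/2\rceil$, returns a fixed constant unless $i = 2k-1$, in which case it returns $y_k$ or $1-y_k$ according to the parity of $c$; hence $M(y)_{ic} = g(i,c) \oplus h(i,c)\,y_{\lceil c/2\rceil}$ for fixed $\{0,1\}$-valued $g,h$, and each oracle call to $M(y)$ can be simulated by a single call to the string oracle $\ket{j}\ket{b}\mapsto \ket{j}\ket{b\oplus y_j}$ together with query-free bookkeeping (routing the queried index to a dummy always-$0$ coordinate when $h(i,c)=0$). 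A $T$-query quantum algorithm for $\cC_n$ therefore yields a $T$-query quantum algorithm that recovers $y$, and since learning an $N$-bit string requires $\Omega(N)$ quantum queries --- for instance, $\mathrm{PARITY}_N$ reduces to it with no query overhead and has bounded-error quantum query complexity $\Omega(N)$ --- we conclude $T = \Omega(n)$.

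I expect the binary-search upper bound and the leaf-counting lower bound to be routine; the step needing genuine care is the quantum embedding, specifically checking that the pair-swapping construction makes every $M(y)$-oracle call simulable by exactly one query to $y$ so that the query count is preserved under the reduction. I would also note explicitly that this argument pins the quantum complexity down only up to a logarithmic factor, $\sQ(\cC_n) = \Omega(n)$ and $O(n\log n)$, matching the gap left open in the theorem statement.
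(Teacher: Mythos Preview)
Your proposal is correct. The upper bounds via per-column binary search and the classical lower bounds via the leaf-counting/Yao argument coincide with the paper's proof essentially verbatim. The genuine difference is in the quantum $\Omega(n)$ lower bound: the paper applies Ambainis' basic adversary method, taking as hard pairs $(M,M')$ those related by swapping the column of Hamming weight $j$ with that of weight $j{+}1$, which gives parameters $m=m'=n-1$ and $\ell=\ell'=1$ and hence $\Omega(n)$. You instead reduce from the problem of learning an $(n/2)$-bit string, via the pair-swapping embedding $y\mapsto M(y)$; your case analysis that each $M(y)$-query depends on at most one bit of $y$ is correct, and the dummy-coordinate trick makes the simulation exact, so the $\Omega(N)$ bound for quantum string recovery (e.g.\ via $\mathrm{PARITY}$) transfers. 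Both routes yield the same $\Omega(n)$; yours is more elementary in that it avoids the adversary machinery and uses only a black-box lower bound for identity/parity, while the paper's argument is a one-line instantiation once the adversary lemma is available. Neither approach closes the $\log n$ gap to the upper bound.
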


The upper bounds in Theorem \ref{thm-col-permuted-half-graphs} follow easily by applying binary search to each column to find the topmost 1-entry of the column.

The deterministic and randomized lower bounds follow from a standard counting argument, which we include as Lemma \ref{lem:large-range-query-lower-bound} for completeness.
The quantum $\Omega(n)$ lower bound is obtained using the adversary method~\cite{Amb02}. While an $\Omega(n \log n)$ comparison-based quantum lower bound for sorting holds true~\cite{HNS02}, we were unable to adapt that proof to our setting.

We show in Section \ref{sec:threshold-sorting}
that the query complexity of learning column-permuted half graphs is equivalent to the complexity of sorting an unknown list $X$, obtained by applying an arbitrary permutation to $[n]$, with \emph{threshold} comparisons of the form ``Is $X[j] \geq i$?'' for any $i, j \in [n]$.

%%%%%%%%%%%%%%%%%%%%%%%%%%%%%%%%%%%%%%%%%%%%%%%%%
\subsubsection{Half graphs}
%%%%%%%%%%%%%%%%%%%%%%%%%%%%%%%%%%%%%%%%%%%%%%%%%

We show in Section \ref{sec:perfectly-interleaved} that the query complexity of learning half graphs is equivalent to the complexity of the perfectly-interleaved bipartite sorting problem that was recently studied by Goswami and Jacob~\cite{GJarxiv, GJ24, GJapprox}. This problem is a natural generalization of the classical nuts-and-bolts problem~\cite{Raw92}, where there is no promised matching between the
nuts and the bolts.

Goswami and Jacob~\cite{GJarxiv, GJapprox} obtained a randomized (and hence quantum) algorithm for the perfectly-interleaved bipartite sorting problem with tight $O(n \log n)$ complexity. Together with our observation that learning half graphs is equivalent to the perfectly-interleaved bipartite sorting problem, this gives tight randomized and quantum upper bound of $O(n \log n)$ on the edge query complexity for learning half graphs.  

A randomized tight lower bound of $\Omega(n \log n)$ immediately follows from the same lower bound for learning column-permuted half graphs. 
The proof of the comparison-based $\Omega(n \log n)$ lower bound for sorting~\cite[Theorem~2]{HNS02} can be immediately seen to yield a quantum $\Omega(n \log n)$ lower bound on the edge query complexity for learning half graphs.

Our main contribution here is a simple quicksort-type randomized algorithm with complexity $O(n \log^2 n)$.
Interestingly, while the naive quicksort-type algorithm for the classic nuts-and-bolts problem pivots on a nut along with its matching bolt, our algorithm sorts the nuts separately, and the bolts separately, both using a simple quicksort-style algorithm. While our algorithm is a log-factor slower than that of Goswami and Jacob that alternates between sorting nuts and bolts in its steps, our algorithm has some advantages: first, our algorithm and analysis are much simpler and essentially follow a textbook-style quicksort analysis; second, by replacing a certain subroutine in our algorithm with a quantum subroutine, we obtain a tight $O(n \log n)$ quantum algorithm. 

\begin{restatable}{theorem}{mainResultGeneralHalfGraphs}
\label{thm-general-half-graphs}
    There exist simple quicksort-type randomized and quantum algorithms for learning graphs in $\cH_n$ with randomized query complexity of $O(n \log^2 n)$, and a tight quantum query complexity of $O(n \log n)$, respectively.
\end{restatable}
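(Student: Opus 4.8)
The plan is to reduce the problem of learning a half graph in $\cH_n$ to independently sorting the ``rows'' and the ``columns'' of the hidden matrix, and to implement each sort with a randomized quicksort whose pivot-based comparisons are realized by edge queries. Concretely, a matrix $M \in \cH_n$ is obtained from $L_n$ by permuting rows by some $\sigma$ and columns by some $\tau$; equivalently, row $i$ has exactly $r_i := \tau(\sigma^{-1}(i))$-many (or an analogous count depending on orientation) ones, and learning $M$ is equivalent to learning the two orderings induced on rows and columns by these ``weights.'' First I would observe that for any two rows $i, i'$, a single edge query at a well-chosen column reveals which of $r_i, r_{i'}$ is larger whenever that column's threshold separates them; more robustly, to compare two rows one can binary-search (using edge queries along a fixed ``reference'' column ordering that has already been sorted, or using $O(\log n)$ queries against a pivot row) to locate each row's weight. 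The key point is that a comparison between two rows, or between a row and a column, costs $O(\log n)$ edge queries rather than $O(1)$, which is exactly where the extra logarithmic factor over the $O(n\log n)$ of plain quicksort comes from.

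The steps, in order: (1) Formalize the equivalence (via Section~\ref{sec:perfectly-interleaved}) between learning $\cH_n$ and perfectly-interleaved bipartite sorting, so that rows play the role of nuts and columns the role of bolts, and a ``comparison'' of a nut against a bolt is a single edge query. (2) Run quicksort on the bolts (columns) alone: pick a random pivot column $c$; to split the remaining columns into those ``lighter'' and ``heavier'' than $c$ we need to compare columns to columns, which we cannot do with one edge query — so instead we first sort the nuts against this single pivot bolt, and then use the sorted nuts to compare all other bolts against $c$. Symmetrically run quicksort on the nuts (rows). (3) Analyze the recursion: the standard quicksort recursion tree has depth $O(\log n)$ in expectation and $O(n)$ comparisons per level; since each comparison here is simulated by $O(\log n)$ edge queries (one binary search against the current pivot's sorted opposite side), the total is $O(n \log^2 n)$ edge queries in expectation, which by a standard concentration/Markov argument gives a $2/3$-success algorithm. (4) For the quantum bound, replace the linear scan that compares all elements to a pivot (or the binary-search comparison subroutine) with a quantum minimum-finding / search subroutine à la Dürr–Høyer, reducing the per-level cost by a $\sqrt{\cdot}$ or $1/\log$ factor so that the recursion sums to $O(n \log n)$; tightness follows from the quantum $\Omega(n\log n)$ lower bound quoted from \cite{HNS02}.

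The main obstacle I expect is step (2)–(3): making precise how a single quicksort partition step is implemented purely through edge queries without circular dependencies between the nut-sort and the bolt-sort, and bounding the cost so that the two interleaved-but-independent sorts still telescope to $O(n\log^2 n)$ rather than, say, $O(n \log^3 n)$. In particular one must be careful that when pivoting on a column $c$, the information already learned (which rows are ``above'' the topmost $1$ of $c$) is reused across all sibling recursive calls rather than recomputed, and that the random pivot choices for rows and for columns can be analyzed separately; I would handle this by charging each edge query to the (pivot, element) pair responsible for it and invoking the usual ``probability that $x$ and $y$ are ever compared is $O(1/|{\rm rank}(x)-{\rm rank}(y)|)$'' argument, now weighted by the $O(\log n)$ cost of a single comparison. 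The quantum speedup in step (4) requires checking that the comparison oracle built from edge queries can be queried in superposition, which is immediate since edge queries are just entries of the hidden matrix.
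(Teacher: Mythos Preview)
Your high-level plan (sort the rows and the columns separately by a quicksort, then combine) matches the paper, but the core subroutine you rely on does not work as stated. You assert that comparing two rows costs $O(\log n)$ edge queries via ``binary search against the current pivot's sorted opposite side,'' and then feed this $O(\log n)$ weight into the standard $\sum_{x,y} O(1/|\mathrm{rank}(x)-\mathrm{rank}(y)|)$ quicksort accounting. But there is no sorted opposite side: the columns are permuted by an unknown $\tau$, so a row is just an unstructured bitstring and binary search along it is impossible. In particular, comparing two rows whose Hamming weights differ by $1$ means locating the unique column that separates them among $n$ unordered columns, which classically costs $\Omega(n)$ queries, not $O(\log n)$. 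Your acknowledged ``main obstacle'' (avoiding circularity between the nut-sort and bolt-sort so that one side is sorted when you need it) is therefore not a detail to be cleaned up but the whole difficulty, and the charging argument you sketch does not close it.

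The paper's fix is to abandon the $O(\log n)$-per-comparison target entirely. After querying the full pivot row $R_i$ (cost $n$), it compares each other row $R_j$ to $R_i$ by randomly sampling indices until one is found where $R_i$ and $R_j$ differ (Lemma~\ref{lem: sample useful}); this costs $O(n/|R_i\oplus R_j|)$ in expectation. Because the rows have distinct Hamming weights $1,\dots,n$, the total cost of one partition step is
\[
n + \sum_{j\neq i} O\!\left(\frac{n}{|i-j|}\right) = O(n\log n),
\]
and plugging this into the quicksort recurrence gives $T(n)=O(n\log^2 n)$. For the quantum algorithm the only change is to replace the sampling by BBHT search (Lemma~\ref{lem: bbht useful}), so each comparison costs $O(\sqrt{n/|R_i\oplus R_j|})$; now the per-partition sum is $\sum_j O(\sqrt{n/j})=O(n)$, and the quicksort recurrence yields the tight $O(n\log n)$. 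So the missing idea is not a cleverer interleaving of the two sorts, but the observation that the \emph{cost} of a comparison is itself inversely proportional to the rank gap, which turns the harmonic sum into the per-level bound directly.
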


Koml{\'{o}}s, Ma, and Szemer{\'{e}}di~\cite[Section~4]{KMS98} claim that their $O(n \log n)$-cost deterministic algorithm for the classic nuts-and-bolts problem also works in our setting, but we were unable to verify this.

We conclude this section with a summary of best-known edge query complexity bounds presented in Table~\ref{table: bounds} (excluding the above-mentioned bound), and the following intriguing observation. As one can observe, in the (deterministic, randomized, or quantum) edge query model, learning matchings is much harder than learning half graphs.
Interestingly, this is in stark contrast with the fact that in the two-party public-coin randomized communication model testing adjacency in matching graphs is much easier (constant cost) than in half graphs (non-constant cost) \cite{HWZ22}.

\begin{table}[H]
    \centering

    \begin{tabular}{| P{.29\textwidth} | P{.22\textwidth} | P{.22\textwidth} | P{.18\textwidth} |}
         \hline
        Description of graph class & Deterministic bounds & Randomized bounds & Quantum bounds \\ [0.5ex] 
         \hline\hline
        matchings & $\frac{n(n-1)}{2}$ & $\Theta(n^2)$ & $\Theta(n^{1.5})$\\
         \hline
        \makecell{column-permuted\\ half graphs} & $\Theta(n \log n)$ & $\Theta(n \log n)$ & $\Omega(n), O(n \log n)$\\
        \hline
        half graphs & $\Omega(n \log n), O(n^2)$ & $\Theta(n \log n)$ & $\Theta(n \log n)$\\
         \hline
    \end{tabular}

    \caption{\label{table: bounds}
    Edge query complexity of matchings and half graphs.}
\end{table}

\subsection{Organization}
We introduce the necessary notation and preliminaries in Section~\ref{sec:prelims}. Section \ref{sec:matching} is devoted to edge query complexity of learning matchings. In Section \ref{sec:learning-half-graphs} we study edge query complexity of learning half graphs and column-permuted half graphs; we also establish the equivalence of these problems with sorting problems. 

%%%%%%%%%%%%%%%%%%%%%%%%%%%%%%%%%%%%%%%%%%%%%%%%
\section{Preliminaries}\label{sec:prelims}
%%%%%%%%%%%%%%%%%%%%%%%%%%%%%%%%%%%%%%%%%%%%%%%%
All logarithms in this paper are base 2. For a positive integer $n$, we use the notation $[n]$ to denote the set $\cbra{1, 2, \dots, n}$. For a string $x \in \zone^n$ and $i \in [n]$, we use the notation $x_i$ to denote the $i$'th bit of $x$. We use the notation $|x|$ to denote the Hamming weight of $x$, which is $|\cbra{i \in [n] : x_i = 1}|$.
 
\subsection{Query complexity}\label{subsec:prelims-query-complexity}

Throughout this subsection we assume $n$ is a positive integer, $\cD \subseteq \zone^n$ is a finite set, $\cR$ is an arbitrary finite set, and $f : \cD \to \cR$ is a function.

A \emph{decision tree}, also called a \emph{query algorithm}, is a binary tree whose leaf nodes are labeled by elements of $\cR$, each internal node is labeled by an index $i \in [n]$ and has two outgoing edges, labeled $0$ and $1$.
On an input $x \in \zone^n$, the tree's computation proceeds from the root down to a leaf as follows: query $x_i$ as indicated by the node's label and follow the edge indicated by the value of $x_i$.
Continue this way until reaching a leaf, at which point the value of the leaf is output.

A query algorithm is said to compute $f$ if the output of the tree on input $x$ equals $f(x)$ for all $x \in \cD$.
The cost of a query algorithm is the number of queries made on a worst-case input, which is exactly the depth of the corresponding tree.
Formally, the query complexity complexity of $f$, denoted $\sD(f)$, is defined as
\[
\sD(f) := \min_{T : T~\text{is a decision tree computing}~f} \textnormal{depth}(T).
\]
A randomized decision tree is a distribution over deterministic decision trees. We say a randomized decision tree computes $f$ with error $1/3$ if for all $x \in \cD$, the probability of it outputting $f(x)$ is at least $2/3$. The depth of a randomized decision tree is the maximum depth of a deterministic decision tree in its support.
Define the randomized query complexity of $f$ as follows.
\[
\sR(f) := \min_{\substack{T : T~\textnormal{is a randomized decision tree}\\\textnormal{that computes}~f~\textnormal{up to error}~1/3}}
\textnormal{depth}(T).
\]

We refer the reader to~\cite{BW02, NC16} for the basics of quantum computing and quantum query complexity. A quantum query algorithm $\mathcal{A}$ for $f$ begins in a fixed initial state $\ket{\psi_0}$ in a finite-dimensional Hilbert space, applies a sequence of unitaries $U_0, O_x, U_1, O_x, \dots, U_T$, and performs a measurement. Here, the initial state $\ket{\psi_0}$ and the unitaries $U_0, U_1, \dots, U_T$ are independent of the input. The unitary $O_x$ represents the ``query'' operation, and does the following for each basis state:
it maps $\ket{i}\ket{b}$ to $\ket{i}\ket{b + x_i \mod 2}$ for all $i \in [n]$ and all $b \in \zone$.
The algorithm then performs a two-outcome measurement and outputs the observed value.
We say that $\mathcal{A}$ is a bounded-error quantum query algorithm computing $f$ if for all $x \in \cD$ the probability of outputting $f(x)$ is at least $2/3$. The quantum query complexity of $f$, denoted by $\sQ(f)$, is the least number of queries required for a quantum query algorithm to compute $f$ with error $1/3$. In the expected-case setting, an algorithm is allowed to perform intermediate measurements (and terminate based on the observed values), required to always output the correct answer, and the cost is the expected number of applications of $O_x$, for the worst-case input $x$. 

We require the following lower bound on deterministic and randomized query complexities of a function whose range is large.
The proof is standard and follows along the lines of the standard comparison-based lower bound for sorting, but we include a proof in the appendix for completeness.

\begin{restatable}{lemma}{countingLowerBound}
\label{lem:large-range-query-lower-bound}
    Let $n$ be a positive integer, $\cD \subseteq \zone^n$ and $\cR$ be finite sets. Let $f : \cD \to \cR$ be a surjective function.
    Then
    \[
    \sD(f) = \Omega\rbra{\log|\cR|}, \qquad \sR(f) = \Omega\rbra{\log|\cR|}.
    \]
\end{restatable}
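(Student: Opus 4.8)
The plan is a counting argument, essentially the same one that underlies the classical $\Omega(n\log n)$ comparison-tree lower bound for sorting. The deterministic bound is immediate from a leaf-counting argument, and the randomized bound follows by a Yao-style averaging over a hard distribution supported on one preimage of each element of $\cR$.

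\emph{Deterministic case.} A decision tree of depth $d$ has at most $2^d$ leaves, and each leaf is labelled by a single element of $\cR$. If a tree $T$ computes $f$ and $f$ is surjective, then every element of $\cR$ must occur as the label of at least one leaf of $T$; hence $2^d \ge |\cR|$, which gives $\mathrm{depth}(T) \ge \log|\cR|$, and therefore $\sD(f) \ge \log|\cR|$.

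\emph{Randomized case.} For each $r \in \cR$ fix an input $x_r \in \cD$ with $f(x_r) = r$; since distinct $r$ yield distinct values, the inputs $x_r$ are pairwise distinct, so $S := \{x_r : r \in \cR\}$ has size $|\cR|$. Let $\mu$ be the uniform distribution on $S$. Fix any deterministic decision tree $T$ of depth $d$. Each leaf of $T$ outputs one fixed value $v \in \cR$, and among the inputs of $S$ that reach that leaf, $T$ is correct only on $x_v$ (if $x_v$ reaches it at all); thus each leaf accounts for at most one $\mu$-input on which $T$ is correct. Since $T$ has at most $2^d$ leaves, $\Pr_{x \sim \mu}[T(x) \ne f(x)] \ge 1 - 2^d/|\cR|$. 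Now let $\mathcal{T}$ be a randomized decision tree of depth $d$ computing $f$ with error at most $1/3$ on every input; in particular its error is at most $1/3$ on $\mu$-average, so by averaging there is a deterministic tree $T$ in its support with $\Pr_{x \sim \mu}[T(x) \ne f(x)] \le 1/3$. Combining the two inequalities gives $1 - 2^d/|\cR| \le 1/3$, i.e. $2^d \ge \tfrac{2}{3}|\cR|$, so $d \ge \log|\cR| - \log\tfrac{3}{2} = \Omega(\log|\cR|)$, and therefore $\sR(f) = \Omega(\log|\cR|)$.

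\textbf{Main obstacle.} There is no real obstacle here; the argument is routine. The only points requiring a moment's care are the observation that each leaf of a deterministic tree can be correct on at most one of the chosen representatives $x_r$, and the averaging step converting the per-input randomized error guarantee into a single deterministic tree that is good on average over the hard distribution $\mu$.
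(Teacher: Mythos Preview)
Your proof is correct and follows essentially the same approach as the paper: a leaf-counting argument combined with Yao's principle applied to the uniform distribution over one chosen preimage per element of $\cR$. The only cosmetic difference is that you treat the deterministic case separately first, whereas the paper simply notes that the randomized bound implies it.
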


We remark that an analogous statement, where the domain is non-Boolean and each query has a small set of outcomes, admits a similar proof.

We require the following modification of Grover's search algorithm~\cite{Grover96} due to Boyer et al.~\cite{BBHT98}.

\begin{lemma}\label{lem:bbht}
    Let $n$ be a positive integer and let $0^n \neq x \in \zone^n$. There exists a quantum query algorithm that uses $O(\sqrt{n/|x|})$ queries in expectation and outputs an index $i \in [n]$ such that $x_i = 1$.
\end{lemma}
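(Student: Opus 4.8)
The plan is to run the exponential-search version of Grover's algorithm due to Boyer, Brassard, Høyer and Tapp, which is tailored to the situation where $t := |x|$ is unknown. Throughout, let $\theta \in (0,\pi/2]$ be determined by $\sin^2\theta = t/n$, and let $\ket u := n^{-1/2}\sum_{i\in[n]}\ket i$ be the uniform superposition. Given the bit oracle $O_x$, one application of $O_x$ with the answer qubit prepared in $\ket{-}$ realizes the phase oracle $\ket i \mapsto (-1)^{x_i}\ket i$; composing it with the (input-independent) reflection about $\ket u$ gives the Grover iterate $G$, which thus costs exactly one query to $O_x$. The key facts are: measuring $G^{j}\ket u$ in the computational basis returns a marked index (one with $x_i=1$) with probability $\sin^2\bigl((2j+1)\theta\bigr)$; and, via the identity $\frac1M\sum_{j=0}^{M-1}\sin^2\bigl((2j+1)\theta\bigr) = \frac12 - \frac{\sin(4M\theta)}{4M\sin(2\theta)}$, the BBHT sampling bound holds: if $j$ is drawn uniformly from $\{0,1,\dots,\lceil M\rceil-1\}$ and one measures $G^{j}\ket u$, then for every $M \ge 1/\sin(2\theta)$ a marked index is returned with probability at least $1/4$.

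The algorithm proceeds in stages, maintaining a parameter $M$ initialized to $M=1$. A stage with parameter $M$ samples $j$ uniformly from $\{0,\dots,\lceil M\rceil-1\}$, measures $G^{j}\ket u$ to obtain an index $i$, and makes one further query to $O_x$ to test whether $x_i=1$; if so it halts and outputs $i$, and otherwise it updates $M \leftarrow \min\{cM,\sqrt n\}$ for a fixed constant $c \in (1,4/3)$ (for instance $c=6/5$) and moves on. A stage with parameter $M$ costs $O(M)$ queries, so the total cost of all stages run before $M$ first reaches the threshold $M^\star := 1/\sin(2\theta)$ is $\sum_{k \,:\, c^k \le cM^\star} O(c^k) = O(M^\star)$ by summing the geometric series. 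For $t \le n/2$ we have $\sin(2\theta) = 2\sqrt{(t/n)(1-t/n)} \ge \sqrt{t/n}$, hence $M^\star \le \sqrt{n/t}$, putting this contribution at $O(\sqrt{n/t})$. From the first stage with $M \ge M^\star$ onward, every stage succeeds with probability at least $1/4$ by the sampling bound, and the stage run $r$ steps after the threshold is crossed costs $O(c^r M^\star)$ queries (or fewer if the cap is active), so the expected remaining cost is at most $\sum_{r \ge 0} (3/4)^r \, O(c^r M^\star) = O(M^\star) = O(\sqrt{n/t})$, where convergence uses $c < 4/3$. Adding the two contributions gives the claimed $O(\sqrt{n/t})$ expected query bound.

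The one regime needing separate care is $t > n/2$, where $\sin(2\theta)$ can be small (indeed $\sin(2\theta)\to 0$ as $t\to n$) so the inequality $\sin(2\theta)\ge\sqrt{t/n}$ fails: here one simply observes that the $j=0$ branch of any stage measures $\ket u$ directly and so returns a marked index with probability $t/n > 1/2$, whence the algorithm halts within $O(1)$ stages in expectation, which is $O(1) = O(\sqrt{n/t})$ since $\sqrt{n/t} < \sqrt 2$. Termination for every $t \ge 1$ follows as well: for $t \le n/2$ a capped stage has $M = \sqrt n \ge 1/\sin(2\theta)$ (using $\sin(2\theta) \ge \sqrt{t/n} \ge 1/\sqrt n$) and hence succeeds with probability $\ge 1/4$, and for $t > n/2$ the $j=0$ argument applies.

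This is essentially the classical BBHT analysis, so I do not expect a genuine obstacle; the main things to get right are the case split at $t = n/2$ (where the relationship between $1/\sin(2\theta)$ and $\sqrt{n/t}$ breaks down), the choice of the growth constant $c$ small enough that the weighted geometric sum of stage costs converges, and the verification that the $\sqrt n$ cap both bounds every stage cost and forces termination.
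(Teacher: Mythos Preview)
The paper does not prove this lemma; it is stated as a known result with a citation to Boyer, Brassard, H{\o}yer and Tapp~\cite{BBHT98}. Your write-up correctly reproduces the standard BBHT exponential-search analysis (the averaging identity, the threshold $M^\star=1/\sin(2\theta)$, and the two geometric sums before and after the threshold), so there is nothing in the paper to compare against beyond the citation itself.

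There is one small gap in your $t>n/2$ case. You argue that ``the $j=0$ branch of any stage'' succeeds with probability $t/n>1/2$ and conclude the algorithm halts in $O(1)$ stages in expectation, but only the \emph{first} stage is forced to take $j=0$; in stage $k$ the event $j=0$ has probability $1/\lceil M_k\rceil$, which decays geometrically, so this observation alone does not bound the expected number of stages, and since stage costs grow geometrically it certainly does not bound expected cost. Nor does your threshold argument cover this regime: for $t$ close to $n$ one has $\sin(2\theta)\approx 2\sqrt{(n-t)/n}$, so $M^\star$ can be as large as $\Theta(\sqrt n)$ rather than $O(\sqrt{n/t})=O(1)$. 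The easy repair (and what BBHT themselves do) is to shift the split to $t\le 3n/4$ versus $t>3n/4$: for $t\le 3n/4$ one still has $1-t/n\ge 1/4$, hence $\sin(2\theta)=2\sqrt{(t/n)(1-t/n)}\ge\sqrt{t/n}$, and your threshold argument goes through verbatim; for $t>3n/4$, simply prepend a constant number of classical samples (query a uniformly random index, output it if it equals $1$), each succeeding with probability $>3/4$.
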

The analogous classical statement, stated below, is easy to see: simply sample indices uniformly at random until a 1 is seen.
\begin{lemma}\label{lem:sample}
    Let $n$ be a positive integer and let $0^n \neq x \in \zone^n$. There exists a randomized query algorithm that uses $O(n/|x|)$ queries in expectation and outputs an index $i \in [n]$ such that $x_i = 1$.
\end{lemma}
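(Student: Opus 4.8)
\textbf{Proof plan for Lemma~\ref{lem:sample}.}
The plan is to give the obvious rejection-sampling procedure and bound its expected number of queries by a standard geometric-series argument. Let $k = |x| \geq 1$. The algorithm repeatedly picks an index $i \in [n]$ uniformly at random (with replacement), queries $x_i$, and halts outputting $i$ as soon as it observes $x_i = 1$. Since the hidden string $x$ is fixed, each individual draw succeeds (i.e.\ hits a $1$-coordinate) independently with probability exactly $k/n$. Hence the total number of queries made is a geometric random variable with success probability $p = k/n$.

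The key computation is then that the expected number of queries equals $1/p = n/k = n/|x|$, which is $O(n/|x|)$ as claimed; correctness is immediate since the algorithm only halts upon seeing a $1$-entry, so the returned index $i$ always satisfies $x_i = 1$. I would spell out the geometric expectation either by the identity $\sum_{t \geq 1} t (1-p)^{t-1} p = 1/p$, or (more cleanly) via the memoryless recurrence $\mathbb{E}[\#\text{queries}] = 1 + (1-p)\,\mathbb{E}[\#\text{queries}]$, which rearranges to $\mathbb{E}[\#\text{queries}] = 1/p$.

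There is essentially no obstacle here: the statement is folklore and the only thing to be careful about is that the bound is on the \emph{expected} number of queries (the procedure has no finite worst-case bound, which is why the lemma is phrased in the expected-case model), and that the draws are independent because $x$ does not change between queries. One minor point worth a sentence: the algorithm needs no knowledge of $k = |x|$ in advance — it simply runs until it succeeds — so the bound $O(n/|x|)$ holds automatically as a function of the unknown input. This is exactly the classical analogue of the $O(\sqrt{n/|x|})$ quantum bound of Lemma~\ref{lem:bbht}, with the quadratic speedup reflecting Grover-style amplitude amplification in place of naive sampling.
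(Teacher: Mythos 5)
Your proposal is correct and matches the paper's approach exactly: the paper dispenses with this lemma in one line (``simply sample indices uniformly at random until a 1 is seen''), and your geometric-expectation calculation is precisely the standard argument behind that remark.
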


\subsection{Learning hidden graphs}
In this paper, we study the query complexity of learning (i.e., reconstructing) a hidden bipartite graph that comes from a known class. We assume that the two parts of the bipartite graph are of the same size and the (ordered) bipartition is given; without loss of generality, we assume that each part is the set $[n]$. For a known class $\cG_n$ of such graphs, an algorithm receives an unknown (i.e., hidden) graph $G=([n],[n],E) \in \cG_n$ and its goal is to recover $E$ by making as few \emph{edge queries} as possible. An edge query is an ordered pair of vertices $(i,j) \in [n] \times [n]$, to which an oracle answers 1 if the vertex $i \in[n]$ in the left part of the graph is adjacent to the vertex $j \in[n]$ in the right part of the graph; otherwise the oracle answers 0.

Throughout this paper, we identify each graph in $\cG_n$ with its \emph{bipartite} adjacency matrix, i.e., the $n \times n$ 0-1 matrix whose $(i,j)$'th entry is 1 if and only if $(i,j)$ is an edge in the graph. Thus, the graph learning problem can be seen as the problem of learning a hidden $n \times n$ 0-1 matrix from a known class by accessing as few of its entries as possible.

From this point of view, the problem of learning a hidden graph can be seen within the framework described in Section~\ref{subsec:prelims-query-complexity} as the problem of computing the Identity function $\mathsf{Id}_{\cG_n} : \cG_n \to \cG_n$, where the input is viewed as a Boolean vector in $\zone^{\binom{n}{2}}$ representing the bipartite adjacency matrix of the input graph.
Depending on the setting (deterministic, randomized, or quantum) we measure the complexity of the graph class $\cG_n$ as the corresponding query complexity of computing $\mathsf{Id}_{\cG_n}$.
Formally, we define the deterministic, randomized, and quantum query complexity of identifying a hidden graph from $\cG_n$, denoted $\sD(\cG_n)$, $\sR(\cG_n)$, $\sQ(\cG_n)$, respectively, as $\sC(\cG_n) := \sC(\mathsf{Id}_{\cG_n})$, where $\sC \in \cbra{\sD, \sR, \sQ}$.

In the literature regarding randomized algorithms for graph reconstruction (see papers cited in Section~\ref{sec:intro}), one sometimes considers a cost measure that is the expected number of queries, maximized over all inputs, where an algorithm is required to always output the correct answer.  
Worst-case randomized complexity is easily seen to be at most a constant times the expectation-based cost above by running such an algorithm and terminating it forcefully (outputting a random answer) after some pre-specified number of steps if we don't learn the graph yet. The correctness analysis is a standard application of Markov's inequality.
We remark that all of our randomized and quantum upper bounds are in the ``weaker'' expected-case model, while our lower bounds are all in the ``stronger'' worst-case setting.

%%%%%%%%%%%%%%%%%%%%%%%%%%%%%%%%%%%%%%%%%%%%%%%%%%
\section{Learning matchings}\label{sec:matching}
%%%%%%%%%%%%%%%%%%%%%%%%%%%%%%%%%%%%%%%%%%%%%%%%%%
In this section we are interested in the query complexity of learning a hidden matching in a graph with a known bipartition. We consider the deterministic, randomized and quantum settings.
Recall that $\cM_n$ denotes the class of all $n \times n$ permutation matrices, i.e., bipartite adjacency matrices of matchings. Our main result of this section is the following.

\mainResultMatching*

We devote the rest of this section to proving all the bounds in the theorem above.

\subsection{Deterministic bounds}
\begin{lemma}\label{lem:matchingub-det}
    There exists a deterministic query algorithm of cost $\frac{n(n-1)}{2}$ that learns a hidden matching from $\cM_n$.
\end{lemma}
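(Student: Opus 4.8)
The plan is to learn the hidden matching column by column, exploiting the structure that each column of a permutation matrix contains exactly one~$1$. Process the columns in order $j = 1, 2, \dots, n$. When we reach column $j$, we will already know the location of the unique $1$-entry in each of columns $1, \dots, j-1$; let $S_{j-1} \subseteq [n]$ be the set of rows used by those columns, so $|S_{j-1}| = j-1$. To find the $1$-entry in column $j$ it suffices to query the entries $(i,j)$ for $i \in [n] \setminus S_{j-1}$, but we may stop one query early: if the first $n-j$ of these queries all return $0$, the remaining (single) candidate row must be the location of the $1$. Thus column $j$ costs at most $n - j$ queries, and summing over $j = 1, \dots, n$ gives $\sum_{j=1}^{n}(n-j) = \sum_{k=0}^{n-1} k = \frac{n(n-1)}{2}$ queries in the worst case.

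I would present this as the explicit decision-tree strategy: maintain the set of "occupied" rows, and for each new column query only the unoccupied rows, in any fixed order, skipping the last one whenever all earlier queries in that column returned~$0$. Correctness is immediate from the defining property of permutation matrices (exactly one $1$ per row and per column): once columns $1, \dots, j-1$ are pinned down, the $1$ in column $j$ lies in one of the $n-(j-1)$ remaining rows, and querying $n-j$ of them either locates it or, by elimination, forces it into the last remaining row. After column $n$ there are no remaining rows to choose from, consistent with the cost $n - n = 0$ for the last column.

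There is essentially no hard step here — this is the straightforward upper bound. The only thing to be slightly careful about is the bookkeeping of which rows remain available and the off-by-one in stopping a query early in each column; both are routine. (The matching lower bound $\sD(\cM_n) \ge \frac{n(n-1)}{2}$, which makes this tight, is the genuinely interesting direction and is handled separately.)
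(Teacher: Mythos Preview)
Your proposal is correct and is essentially the same greedy argument as the paper's proof: iterate over the vertices on one side, and for each one query all but one of the remaining unmatched candidates on the other side, using the permutation-matrix structure to infer the last one by elimination. The only cosmetic difference is that you phrase it in terms of columns and rows of the bipartite adjacency matrix while the paper phrases it in terms of left-side vertices and their unmatched right-side neighbors.
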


\begin{proof}
    Consider the naive greedy algorithm that iteratively finds the neighbor of each vertex on the left side.
    In order to do this for a specific vertex $v$, it checks for adjacency with all but one unmatched neighbors on the right side. Either this process encounters the neighbor of $v$, or the last vertex on the right side has to be its neighbor. Thus, when $i$ vertices have been matched, we can discover the next edge in a maximum of $n - i - 1$ queries. Summing this over $i$ from $0$ to $n-1$ gives the upper bound of $\frac{n(n-1)}{2}$.
\end{proof}

\begin{lemma}\label{lem:matchinglb-det}
    Any deterministic query algorithm that learns a hidden matching from $\cM_n$ must make at least $\frac{n(n-1)}{2}$ queries.
\end{lemma}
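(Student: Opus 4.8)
The plan is to use an adversary argument: I will describe an adaptive strategy for the oracle that, against any deterministic algorithm, keeps a large set of possible matchings consistent with all answers given so far, and forces the algorithm to make $\binom{n}{2}$ queries before this set collapses to a single matching. The natural adversary answers ``0'' (non-edge) to every query as long as doing so is consistent with at least one matching in $\cM_n$ not yet ruled out, and is forced to answer ``1'' only when refusing to do so would leave no consistent completion.

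More concretely, I would track, after each query, the bipartite graph $N$ of pairs $(i,j)$ that have been answered ``0'' (known non-edges); a matching is consistent iff it is a perfect matching in the complement of $N$ avoiding these pairs and respecting any ``1'' answers. The key observation is that if the algorithm queries $(i,j)$ and, among the vertices on the right not yet known-nonadjacent to $i$, there are at least two candidates (equivalently, row $i$ has at most $n-2$ recorded $0$'s and $i$ is not yet matched), the adversary can safely answer ``0'': by a Hall-type / bipartite-graph argument the complement graph still has a perfect matching. Symmetrically for columns. So the adversary is only ever forced to commit an edge $(i,j)$ when row $i$ already has $n-1$ zeros in it (or column $j$ does), i.e.\ when the edge is logically determined. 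Since a determined edge requires $n-1$ prior $0$-answers in its row (or column), and each $0$-answer sits in a unique cell, I would argue that to pin down all $n$ matched edges the algorithm must, in effect, expose $n-1$ cells of the first row it resolves, then $n-2$ genuinely new cells to resolve the second row once one vertex is eliminated, and so on — the same arithmetic series $\sum_{i=0}^{n-1}(n-1-i) = \binom{n}{2}$ as in the upper bound.

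To make the counting rigorous I would phrase it as a potential/accounting argument rather than hand-waving about ``rows resolved''. Assign to the state the quantity $\Phi = \sum_{i} \max(0, (n-1) - z_i)$ where $z_i$ is the number of known non-edges currently forced in row $i$ that are still ``useful'' (i.e.\ in columns not yet matched away); the algorithm terminates only when it has identified the matching, which forces $\Phi$ (or an analogous row-and-column version of it) down to $0$, and I would show each query decreases the relevant potential by at most $1$. The delicate point is handling ``1''-answers and the symmetry between rows and columns: once the adversary is forced to reveal an edge $(i,j)$, both row $i$ and column $j$ leave the game, and I must ensure this does not let the algorithm ``skip'' queries — essentially, that revealing $(i,j)$ only happens after the cost for that edge has already been paid. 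The cleanest route is to define the adversary so that it maintains the invariant ``every unmatched left-vertex has at least two available right-neighbours and vice versa'', prove this invariant is preserved whenever a safe ``0'' is returned (using König/Hall), and observe that the algorithm cannot certify the matching while the invariant holds; then count how many ``0'' answers are needed to break the invariant everywhere, which is exactly $\binom n 2$.

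The main obstacle I anticipate is the bipartite-matching feasibility check that justifies the ``safe to answer 0'' step: I need that deleting one more edge from the complement graph keeps a perfect matching, and this is only true under the right invariant on degrees, so the induction hypothesis has to be set up carefully (it is not enough that row $i$ and column $j$ individually have spare capacity; one must rule out a bad Hall set forming). A secondary subtlety is bookkeeping the interaction of forced ``1''s with the potential so that the bound comes out as exactly $\binom n2$ and not merely $\Omega(n^2)$; getting the constant right is what makes this tight, and it will require being precise about the order in which rows/columns can be eliminated. I expect the write-up to mirror the structure of Aigner's classical argument~\cite{Aig88} adapted to the given-bipartition setting, and the feasibility lemma to be the one genuinely load-bearing step.
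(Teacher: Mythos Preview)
Your instinct to run an adversary that answers ``0'' whenever possible is exactly right, and it is essentially what the paper does too. But you are making the analysis far harder than it needs to be: the Hall-type feasibility check, the potential function, and the bookkeeping around forced 1-answers are all artifacts of your route rather than intrinsic to the problem.

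The paper's proof sidesteps every one of your anticipated obstacles in two moves. First, rather than building a dynamic adversary and proving at each step that a ``0'' answer is safe, it simply considers the single all-$0$ root-to-leaf path in the decision tree. This path exists and reaches a leaf with a consistent matching as soon as you assume the algorithm makes no redundant queries (every internal node has both children). There are no forced 1-answers to worry about, and no Hall argument is needed. Second, the counting: say the output at this leaf is, without loss of generality, the identity matching. No diagonal cell was queried on this path (each such cell has value $1$ in the identity). For every unordered pair $\{i,j\}$ with $i\neq j$, the matching obtained from the identity by transposing columns $i$ and $j$ agrees with the identity except on the four cells $(i,i),(i,j),(j,i),(j,j)$; since the diagonal cells are unqueried, ruling this matching out forces at least one of $(i,j),(j,i)$ to have been queried. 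Each off-diagonal query covers exactly one such pair, so the path has length at least $\binom{n}{2}$.

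So the comparison: your approach could be made to work, but the step you correctly flag as load-bearing---that the adversary is forced to answer $1$ only when a row or column already has $n-1$ zeros---is not obvious (a bad Hall set can in principle form earlier) and would require real effort. The paper avoids it entirely by never asking whether a $0$ answer is ``safe'', only observing that the all-$0$ path is there. And the paper's transposition argument delivers the exact constant $\binom{n}{2}$ immediately, with none of the potential-function accounting you anticipate needing.
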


\begin{proof}
    Towards a contradiction, consider an algorithm $\cA$ of query cost less than $\frac{n(n-1)}{2}$ that learns a hidden matching from $\cM_n$. We may assume that the algorithm makes no redundant queries, that is, every internal node in the corresponding decision tree has two children, and every leaf contains at least one consistent matching. Consider the output at the leaf of the decision tree that corresponds to all edge queries being answered as 0. Since there is at least one consistent matching at this leaf, we may assume that it is the Identity matrix (by permuting the row of the matrix or, equivalently, by relabeling the vertices in the right side of the graph). By our assumption, none of the diagonal entries are queried on this path because all the query outcomes are 0. If we can show existence of a $2\times2$ submatrix with $1$'s on the main diagonal whose off-diagonal entries have not been queried, this would imply the existence of another matching reaching this leaf by flipping the 0's and 1's in this submatrix. This would yield a contradiction since the leaf has only one output.

    Whenever $\cA$ queries a pair on this path of the form $(i, j)$ with $i > j$, we assume that it has also queried the pair $(j, i)$ at no extra cost. There are $\frac{n(n-1)}{2}$ entries of the matrix above the main diagonal. If there is one such entry that is not queried, say $(i, j)$ with $i < j$, then by the above assumption the entire submatrix comprising the entries $(i, i), (i, j), (j, i), (j, j)$ is unqueried, yielding the required contradiction. Thus this path must make at least $\frac{n(n-1)}{2}$ edge queries.
\end{proof}

\subsection{Randomized bounds}
Recall that the randomized query complexity of learning a hidden graph in $\cM_n$ is the randomized query complexity of the identity function $f = \mathsf{Id}_{\cM_n} : \cM_n \to \cM_n$.
The intuition behind our $\Omega(n^2)$ randomized lower bound is as follows: 
Let $\cR$ denote a relation consisting of pairs of matrices from $\cM_n$, in which each matrix is one column swap away from its partner, i.e.~two columns are interchanged. Let $\cR_{i, j}$ denote the subrelation of $\cR$ in which each pair differs in the entry $(i, j)$.
Intuitively, a randomized algorithm for computing $f$ must distinguish a constant fraction of the pairs in $\cR$. Moreover, a single query to the entry $(i,j)$ can distinguish at most $|\cR_{i, j}|$ pairs in $\cR$. Thus, one would expect $|\cR|/\max_{i, j}|\cR_{i,j}|$ to be a lower bound on the randomized query complexity of $f$.
Aaronson's variant of Ambainis' adversary method adapted to give randomized lower bounds~\cite[Theorem~5]{Aar04} captures this intuition and this is what we use. The version we state below is from~\cite[Section~3]{AKPVZ21}.

\begin{theorem}[{\cite[Theorem~5]{Aar04}}]\label{thm: classical adversary bound}
    Let $S \subseteq \zone^{N}$, let $H$ be a finite set, and let $f : S \to H$ be a function. Let $R : S \times S \to \mathbb{R}_{\geq 0}$ be a real-valued function such that $R(x, y) = R(y, x)$ for all $x, y \in S$ and $R(x, y) = 0$ whenever $f(x) = f(y)$. For $x \in S$ and $i \in [N]$, define
    $$
    \theta(x, i) = \frac{\sum_{y \in S} R(x, y)}{\sum_{y \in S : x_i \neq y_i} R(x, y)},
    $$
    where $\theta(x, i)$ is undefined if the denominator is 0. Define
    $$
    \mathsf{CRA}(f) = \max_R\min_{\substack{x, y \in S, i \in [n] :\\R(x, y) > 0, x_i \neq y_i}} \max\cbra{\theta(x, i), \theta(y, i)}.
    $$
    Then, $\sR(f) = \Omega(\mathsf{CRA}(f))$.
\end{theorem}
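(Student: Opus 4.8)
The plan is to adapt the standard weighting-and-charging argument behind adversary lower bounds (this $\mathsf{CRA}$ bound is the classical analogue of Ambainis' method) to the randomized decision-tree setting. Fix a randomized algorithm of depth $d = \sR(f)$ that computes $f$ with error at most $1/3$, and view it as a distribution $\mu$ over deterministic decision trees of depth at most $d$. Let $R$ be an adversary weight function attaining the outer maximum in the definition of $\mathsf{CRA}(f)$, and set $M = \sum_{x,y \in S} R(x,y)$. The only property of $R$ the argument will use is that for every triple $(x,y,i)$ with $R(x,y) > 0$ and $x_i \ne y_i$, at least one of $\theta(x,i), \theta(y,i)$ is at least $\mathsf{CRA}(f)$; unpacking $\theta$, this says that at least one endpoint --- say $x$ --- satisfies $\sum_{z : z_i \ne x_i} R(x,z) \le \frac{1}{\mathsf{CRA}(f)}\sum_z R(x,z)$.

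First I would record the elementary ``progress'' observation: if a deterministic tree $A$ routes two inputs $x, y$ with $R(x,y) > 0$ to the same leaf, then $A(x) = A(y)$, and since $R(x,y) > 0$ forces $f(x) \ne f(y)$, the tree errs on at least one of $x, y$. A union bound over the two error events gives $\Pr_{A \sim \mu}[A \text{ sends } x, y \text{ to a common leaf}] \le 2/3$. Weighting by $R(x,y)$ and using linearity of expectation, the expected total $R$-weight of the pairs that a random tree \emph{separates} is $\Omega(M)$; in particular some fixed deterministic tree $A^*$ of depth at most $d$ separates $R$-weight $\Omega(M)$.

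Second I would prove the complementary upper bound: any depth-$T$ deterministic tree separates at most $O(T/\mathsf{CRA}(f))\cdot M$ of the $R$-weight. Each separated pair $\{x,y\}$ first diverges at a unique internal node, whose queried coordinate $i$ satisfies $x_i \ne y_i$; charge $R(x,y)$ to whichever endpoint has $\theta(\cdot,i) \ge \mathsf{CRA}(f)$ (one exists, by the choice of $R$). A fixed input $x$ can only be charged at the internal nodes on its own root-to-leaf path --- at most $T$ of them --- and at a node querying coordinate $i$ the weight charged to $x$ is at most $\sum_{z : z_i \ne x_i} R(x,z) \le \frac{1}{\mathsf{CRA}(f)}\sum_z R(x,z)$. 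Summing over the $\le T$ nodes and then over all $x$ bounds the total charge, and hence the total separated weight, by $O(T/\mathsf{CRA}(f))\cdot M$. Setting $T = d$ and comparing with the first part forces $\sR(f) = d = \Omega(\mathsf{CRA}(f))$.

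The crux is the charging argument in the second step. One must use the minimax form of $\mathsf{CRA}(f)$ to be sure that \emph{every} separated positively-weighted pair has an endpoint with $\theta \ge \mathsf{CRA}(f)$; one must bound a node's contribution to a fixed $x$ by the ``bit-flip weight'' $\sum_{z : z_i \ne x_i} R(x,z)$ rather than by all of $\sum_z R(x,z)$; and one must observe that summing over a single input's path of length $\le T$ --- not over the up to $2^T$ nodes of the whole tree --- already accounts for all of the charge. The remaining issues are routine and can be dealt with in passing: ordered versus unordered pairs (affects only constants), the convention that $\theta(x,i)$ is undefined exactly when $\sum_{z : z_i \ne x_i} R(x,z) = 0$ (so nothing is ever charged to $x$ at such a node), and, if convenient, the assumption that no coordinate is queried twice along a path (which only strengthens the bound).
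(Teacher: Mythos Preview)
The paper does not supply its own proof of this theorem: it is quoted verbatim as a known tool, attributed to Aaronson~\cite{Aar04} (in the formulation of~\cite{AKPVZ21}), and used as a black box in the randomized lower bound for matchings. So there is no ``paper's proof'' to compare against.

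That said, your argument is correct and is essentially the standard proof of the classical adversary bound. The two steps---(i) a randomized algorithm with error $1/3$ must, for some deterministic tree in its support, separate a constant fraction of the total $R$-weight, and (ii) a depth-$T$ deterministic tree can separate at most $O(T/\mathsf{CRA}(f))\cdot M$ weight via the charging to the high-$\theta$ endpoint---are exactly the ingredients in Aaronson's original proof. The minor caveats you flag (ordered vs.\ unordered pairs, the sup over $R$ possibly not being attained, redundant queries along a path) are all genuinely harmless and affect only constants. One tiny addition worth making explicit: when bounding the charge to $x$ at a node querying $i$, you use $\theta(x,i)\ge \mathsf{CRA}(f)$ precisely because that was the criterion for charging to $x$; stating this in the write-up avoids the appearance of circularity.
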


We use the above theorem with $f = \mathsf{Id}_{\cM_n}, S = H = \cM_n$, which is exactly our setting of learning a hidden matching. With function $R$ we simulate a relation consisting of pairs of matrices that are ``hard'' to distinguish: a pair of matchings $(M, M')$ is in the relation if and only if $M'$ can be obtained from $M$ by doing a single column swap. Intuitively, these are the pairs of inputs that are hardest to distinguish by edge queries as they differ on only two entries. A formal statement and its proof are below.

\begin{theorem}
The randomized query complexity of learning a hidden matching from $\cM_n$ is $\Theta(n^2)$.
\end{theorem}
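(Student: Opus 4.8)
The plan is to prove the lower bound via Theorem~\ref{thm: classical adversary bound} and the upper bound via Lemma~\ref{lem:matchingub-det}; only the lower bound requires work. Set $N = \binom{n}{2}$, $S = H = \cM_n$, and $f = \mathsf{Id}_{\cM_n}$. Define the weight function $R$ by putting $R(M, M') = 1$ if $M'$ is obtained from $M$ by transposing two columns (equivalently, two rows), and $R(M,M') = 0$ otherwise. This $R$ is symmetric and vanishes on equal pairs, as required. For a fixed matching $M$ and a fixed entry $(i,j) \in [n]\times[n]$, I would compute the two quantities in the definition of $\theta$. The numerator $\sum_{M'} R(M, M')$ counts the matchings one column-swap away from $M$: there are exactly $\binom{n}{2}$ such $M'$, obtained by choosing which two columns to swap. (A column swap always produces a genuinely different matching, so there is no overcounting.) The denominator $\sum_{M' : M_{ij} \neq M'_{ij}} R(M, M')$ counts those column swaps of $M$ that change the entry $(i,j)$. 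I would argue this count is small — a constant, independent of $n$.

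The key combinatorial step is this denominator count. Say $M$ matches left-vertex $a$ to right-vertex $b$; I would split into the two cases $M_{ij} = 1$ (i.e.\ $(i,j)$ is the edge $(a,b)$ with $a=i, b=j$) and $M_{ij} = 0$. If $M_{ij} = 1$: a swap of two columns changes the entry $(i,j)$ from $1$ to $0$ exactly when one of the swapped columns is column $j$; there are $n-1$ such swaps. If $M_{ij} = 0$: let $(i, c)$ be the edge at left-vertex $i$ and $(r, j)$ be the edge at right-vertex $j$ in $M$, so $c \neq j$ and $r \neq i$. A swap of columns $p, q$ turns entry $(i,j)$ from $0$ into $1$ precisely when $\{p,q\} = \{j, c\}$ — that is, the unique swap that moves a $1$ into position $(i,j)$. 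So the denominator is $1$ in this case. Taking the worse (larger) of the two $\theta$-values over an edge $(i,j)$ where $R(M, M') > 0$ and $M_{ij} \neq M'_{ij}$: for any such pair, at least one of $M, M'$ has a $0$ in position $(i,j)$ — indeed a transposition of columns changes $(i,j)$ from $1\to 0$ in one of the matrices and $0 \to 1$ in the other, so at least one of $\theta(M, i_{\mathrm{col}})$, $\theta(M', i_{\mathrm{col}})$ corresponds to the ``$M_{ij}=0$'' case and equals $\binom{n}{2} / 1 = \binom{n}{2}$. Hence $\max\{\theta(M,(i,j)), \theta(M',(i,j))\} \geq \binom{n}{2}$ for every relevant pair, giving $\mathsf{CRA}(f) \geq \binom{n}{2} = \Omega(n^2)$, and Theorem~\ref{thm: classical adversary bound} yields $\sR(\cM_n) = \Omega(n^2)$.

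I would finish by noting the matching upper bound: the deterministic algorithm of Lemma~\ref{lem:matchingub-det} already achieves $n(n-1)/2 = O(n^2)$ queries, and a randomized algorithm can of course simulate it, so $\sR(\cM_n) = O(n^2)$. Combining the two bounds gives $\sR(\cM_n) = \Theta(n^2)$.

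The main obstacle I anticipate is getting the denominator count exactly right and, more subtly, correctly handling the $\max\{\theta(x,i), \theta(y,i)\}$ in the statement of Theorem~\ref{thm: classical adversary bound}: the bound only needs one of the two endpoints of each edge of the relation to have a large $\theta$, so I must make sure that for every pair $(M, M')$ with $R > 0$ and every coordinate $(i,j)$ on which they differ, the endpoint with a $0$ there is picked up. The observation that a column transposition necessarily creates a $1 \to 0$ change on one side and a $0 \to 1$ change on the other is what makes this work, and I would state it carefully. A secondary point to check is that the ``$1 \to 0$'' case, where $\theta = \binom{n}{2}/(n-1) = \Theta(n)$, does \emph{not} spoil the bound — it doesn't, precisely because the $\max$ lets us ignore it in favor of the partner endpoint.
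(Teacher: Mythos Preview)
Your proposal is correct and follows essentially the same approach as the paper: the same weight function $R$ (column-swap pairs), the same numerator count $\binom{n}{2}$, and the same key observation that the endpoint with a $0$ at position $(i,j)$ has denominator exactly $1$. The paper simply shortcuts your case analysis by writing ``without loss of generality $M_1(i,j)=0$'' and computing only that case, whereas you compute both cases and then invoke the $\max$; these are the same argument. One small slip: you set $N=\binom{n}{2}$, but the input is an $n\times n$ bipartite adjacency matrix, so $N=n^2$ --- this has no effect on the argument.
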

\begin{proof}
The upper bound is trivial. 
We use Theorem~\ref{thm: classical adversary bound} to show the lower bound. In Theorem~\ref{thm: classical adversary bound}, set $N = n^2$ (we view an element in $\zone^{N}$ as a $[n] \times [n]$ 0-1 matrix), let $S$ be the set of $n \times n$ permutation matrices (bipartite adjacency matrices of matchings), and let $f : S \to S$ be the identity function, that is, the goal is to recover the input. This is exactly our setting, with query access to the bits of the input. For input matrices $M_1, M_2$, define 
$$
R(M_1, M_2) = \begin{cases}
    1 & \text{if } M_1 \textnormal{ and } M_2 \textnormal{ are one column swap away from each other; }\\
    0 & \textnormal{otherwise}.
\end{cases}
$$
We now analyze the quantity $\mathsf{CRA}(f)$ from Theorem~\ref{thm: classical adversary bound} with this function $R$. Fix arbitrary inputs $M_1, M_2$ with $R(M_1, M_2) > 0$ (by our definition, this means $M_1$ and $M_2$ are one column swap away from each other). Fix $(i, j) \in [n] \times [n]$ such that $M_1(i, j) \neq M_2(i, j)$. Without loss of generality assume that $M_1(i, j) = 0$. We have
$$
\theta(M_1, (i, j)) = \frac{\sum_{M \in S} R(M_1, M)}{\sum_{M \in S : M_1(i, j) \neq M(i, j)} R(M_1, M)} = \binom{n}{2}.
$$
To see the above, first note that the numerator is $\binom{n}{2}$ since any pair of columns in $M_1$ can be swapped. For the denominator, observe that there is only one other column that can be swapped with the $j$'th one in $M_1$ to change the entry $(i,j)$ from 0 to 1: this is the (unique) column where the $j$'th row has a 1-entry. Thus the denominator is 1.
Hence, $\mathsf{CRA}(f) \geq \binom{n}{2}$, and therefore Theorem~\ref{thm: classical adversary bound} implies $\sR(f) = \Omega(n^2)$.
\end{proof}

\subsection{Quantum bounds}
An asymptotically tight quantum query complexity bound of $\Theta(n^{1.5})$ follows from a recent result of van Apeldoorn et al.~\cite{AGLNWW21}. While they do not state their result in the form below, it can easily be seen to follow from their statement.
\begin{lemma}[{\cite[Lemma~16]{AGLNWW21}}]\label{lem: quantum matching}
    Let $n$ be a positive multiple of 4. Then the quantum query complexity of learning a hidden matching from $\cM_n$ is $\Omega(n^{1.5})$.
\end{lemma}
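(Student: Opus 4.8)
The final statement in the excerpt is Lemma 3.7 (``\cite[Lemma~16]{AGLNWW21}''): for $n$ a positive multiple of $4$, the quantum query complexity of learning a hidden matching from $\cM_n$ is $\Omega(n^{1.5})$. The corresponding upper bound $O(n^{1.5})$ is also implicitly needed for Theorem~\ref{thm-matching}, but the displayed lemma is specifically the lower bound.

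Let me think about how to prove the $\Omega(n^{1.5})$ lower bound on quantum query complexity for learning a hidden permutation matrix.
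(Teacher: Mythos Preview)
Your proposal does not contain a proof. You have only restated the lemma and written ``Let me think about how to prove\ldots'' without supplying any argument, technique, or even an outline of an approach. There is nothing here to evaluate against the paper.

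For context, the paper itself does not prove this lemma either: it is quoted directly as \cite[Lemma~16]{AGLNWW21} and treated as a known result, with the remark that while van Apeldoorn et al.\ do not state it in precisely this form, it follows easily from their statement. So if your task is to reproduce the paper's treatment, the appropriate response is simply to cite the result. If instead you intend to supply an actual self-contained proof (which the paper does not), you would need to either reduce learning a hidden permutation matrix to a problem with a known $\Omega(n^{1.5})$ quantum lower bound, or apply a quantum adversary argument directly; neither of these appears in your proposal.
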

As observed in \cite{AGLNWW21}, this bound is tight due to a matching upper bound achieved with a simple algorithm that for every vertex on the left side of the bipartition finds its unique neighbor on the right side using Grover's search algorithm with $O(\sqrt{n})$ queries (per vertex). Consequently, the quantum query complexity of learning a hidden matching from $\cM_n$ is $\Theta(n^{1.5})$.

%%%%%%%%%%%%%%%%%%%%%%%%%%%%%%%%%%%%%%%%%%%%%%%%%%%%%%%
\section{Learning half graphs}\label{sec:learning-half-graphs}
%%%%%%%%%%%%%%%%%%%%%%%%%%%%%%%%%%%%%%%%%%%%%%%%%%%%%%%
In this section, we consider the problem of learning a hidden half graph or, equivalently, learning a hidden matrix which is a permutation of the lower triangular matrix.
We consider two different settings.
The first setting is where only columns are permuted (equivalently, only rows are permuted), and the second setting is where both rows and columns are permuted. In Section \ref{subsec:sorting}, we show that the former is equivalent to sorting with `threshold' comparisons, and the latter is equivalent to perfectly interleaved bipartite sorting \cite{GJarxiv, GJapprox}, which a generalization of the classical nuts and bolts sorting problem.
 
Our main results about learning half graphs are proved in Sections \ref{subsec:column-permuted-half-graph-learning} and \ref{subsec:row-column-permuted-half-graph-learning}. We state them below after recalling the necessary notation.
We use $L_n$ to denote the $n \times n$ lower triangular matrix, i.e., the matrix whose entries below (and including) the main diagonal are all 1. 
We denote by $\cC_n$ the set of $n \times n$ matrices that can be obtained by applying an arbitrary column permutation to $L_n$; and by $\cH_n$ the set of $n \times n$ matrices that can be obtained by applying an arbitrary column permutation and an arbitrary row permutation to $L_n$.

\mainResultColPermutedHalfGraphs*

\mainResultGeneralHalfGraphs*

\noindent

%%%%%%%%%%%%%%%%%%%%%%%%%%%%%%%%%%%%%%%%%%%%%%%%%%%%%%%
\subsection{Equivalence with sorting problems}\label{subsec:sorting}
%%%%%%%%%%%%%%%%%%%%%%%%%%%%%%%%%%%%%%%%%%%%%%%%%%%%%%%
In this section, we observe that the problems of learning column-permuted half graphs (i.e., graphs from $\cC_n$) and half graphs (i.e., graphs from $\cH_n$) are equivalent to certain sorting problems. We start with the former.

%%%%%%%%%%%%%%%%%%%%%%%%%%%%%%%%%%%%%%%%%%%%%%%%%%%%%%%
\subsubsection{Sorting with threshold queries}\label{sec:threshold-sorting}
%%%%%%%%%%%%%%%%%%%%%%%%%%%%%%%%%%%%%%%%%%%%%%%%%%%%%%%
Let $X$ be a hidden list of length $n$ obtained by applying an arbitrary permutation to $[n]$. The goal of an algorithm is to identify $X$. The algorithm has access to an oracle that answers queries of the form ``Is $X[j] \geq i$?'' for any $i, j \in [n]$. We refer to such queries as \emph{threshold queries}. The cost of the algorithm is the number of queries made to the oracle in the worst case. 
Let $\cT_n$ denote the family of all lists obtained as a permutation of $[n]$.
Using the query complexity framework from Section \ref{subsec:prelims-query-complexity},
we define the deterministic, randomized, and quantum query complexities of identifying a hidden list from $\cT_n$ with threshold queries. We denote them as $\sD^t(\cT_n)$, $\sR^t(\cT_n)$, and $\sQ^t(\cT_n)$, respectively.

\begin{lemma}\label{lem:sorting-thr-equiv-colpermuted}
    Let $n$ be a positive integer. Then for all $\sC \in \cbra{\sD, \sR, \sQ}$,
    $$
    \sC^t(\cT_n) = \sC(\cC_n).
    $$
\end{lemma}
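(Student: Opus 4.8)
The plan is to establish a bijection between instances of the two problems that preserves the information content of every query, so that any decision tree (deterministic, randomized, or quantum) for one problem transfers verbatim to the other with the same cost. The key observation is the following dictionary between a hidden list $X \in \cT_n$ and a hidden matrix $A \in \cC_n$. Given $X$, define the $n \times n$ matrix $A = A(X)$ by $A[i,j] = 1$ iff $X[j] \geq i$. For fixed $j$, the $j$-th column of $A$ is then the indicator vector of $\{1, 2, \dots, X[j]\}$, i.e.~it has $1$'s in rows $1$ through $X[j]$ and $0$'s below; this is exactly column $X[j]$ of $L_n$. Since $X$ is a permutation of $[n]$, the map $j \mapsto X[j]$ is a bijection, so $A(X)$ is precisely the column permutation of $L_n$ in which column $j$ of $A$ equals column $X[j]$ of $L_n$. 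Conversely, any $A \in \cC_n$ arises this way from a unique $X \in \cT_n$: in column $j$ of $A$ the $1$-entries form an initial segment of rows (being a permutation of a column of $L_n$), and setting $X[j]$ to be the number of $1$'s in column $j$ recovers a permutation of $[n]$. So $X \mapsto A(X)$ is a bijection $\cT_n \to \cC_n$, and crucially a threshold query ``Is $X[j] \geq i$?'' returns exactly the bit $A(X)[i,j]$, while an edge query $(i,j)$ to $A$ returns exactly the answer to ``Is $X[j] \geq i$?''. Thus queries correspond one-to-one, with matching answers.

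Next I would spell out how this bijection turns an algorithm for one problem into an algorithm for the other of identical cost, handling the three complexity measures uniformly. For the deterministic case: given a decision tree $T$ solving one problem, relabel each internal node querying coordinate/pair $(i,j)$ by the corresponding query in the other problem (threshold query $j, i$ versus edge query $(i,j)$), and relabel each leaf output $X$ by $A(X)$ (or $A$ by $X = X(A)$); since answers agree under the bijection, the new tree reaches the ``same'' leaf on corresponding inputs and outputs the correct answer, with the same depth. For the randomized case, apply this transformation to every tree in the support of the distribution, keeping the same mixing weights; correctness probability and worst-case depth are preserved. For the quantum case, the query oracle $O_A$ acting as $\ket{(i,j)}\ket{b} \mapsto \ket{(i,j)}\ket{b \oplus A[i,j]}$ is, under the identification of the index register element $(i,j)$ with the threshold-query parameters $(j,i)$, literally the same unitary as the threshold-query oracle; so the very same sequence of input-independent unitaries $U_0, U_1, \dots, U_T$ interleaved with oracle calls, followed by the measurement whose outcome labels are mapped through the bijection on outputs, constitutes a quantum algorithm for the other problem using the same number of queries and achieving the same success probability. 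This gives $\sC^t(\cT_n) \le \sC(\cC_n)$ and $\sC(\cC_n) \le \sC^t(\cT_n)$ for each $\sC \in \{\sD, \sR, \sQ\}$, hence equality.

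I do not anticipate a serious obstacle here; this is essentially a definitional equivalence. The one point requiring a little care is the quantum case: one must note that the index register in the edge-query model ranges over $[n] \times [n]$ and that the reindexing $(i,j) \leftrightarrow (j,i)$ is a fixed permutation of basis states that can be absorbed into the (input-independent) unitaries, so no extra queries are incurred and the oracle genuinely matches. A second minor point is to confirm that the transformation of a randomized algorithm stays within the same model as used in the paper (worst-case depth of a distribution over deterministic trees), which it does since we transform each tree in the support individually. With these remarks, the proof is a short verification following the query-by-query correspondence established in the first paragraph.
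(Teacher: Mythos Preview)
Your approach is essentially the same as the paper's: set up a bijection between $\cT_n$ and $\cC_n$ under which a threshold query and an edge query return identical bits, and transport algorithms across it. The paper is terser (it does not spell out the quantum case as carefully as you do), but the idea is identical.

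One small correction: under the paper's convention $L_n[i,j]=1$ iff $i\ge j$, so column $k$ of $L_n$ is $0^{k-1}1^{n-k+1}$ (zeros on top), whereas your $A(X)$ has $j$-th column $1^{X[j]}0^{n-X[j]}$ (ones on top). Thus $A(X)$ is a column permutation of the \emph{upper} triangular matrix, not of $L_n$, and your claim ``this is exactly column $X[j]$ of $L_n$'' is off by a row reversal. The paper's bijection instead sets the $j$-th column to $0^{n-X[j]}1^{X[j]}$ and gets the query correspondence $M_X(i,j)=1 \iff X[j]\ge n-i$. Either adopt that, or note that row reversal is a fixed input-independent relabeling that can be absorbed into the non-oracle unitaries; with that tweak your argument goes through unchanged.
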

\begin{proof}
    To prove the lemma, we establish a bijection between the lists in $\cT_n$ and matrices in $\cC_n$, and show that threshold queries to a list in $\cT_n$ are equivalent to edges queries to the corresponding matrix in $\cC_n$.

    The bijection is defined as follows. To a list $X \in \cT_n$ we associate the matrix $M_X \in \cC_n$, where the $j$-th column of $M_X$ is $0^{n - X[j]}1^{X[j]}$. That is, the top $n - X[j]$ entries are 0s, and the remaining entries are 1s. According to this bijection, a matrix $M \in \cC_n$ corresponds to the list $X_M \in \cT_n$, where for every $j \in [n]$, $X_M[j]$ is equal to the number of 1s in the $j$-th column of $M$.
    To show the equivalence between queries, we observe that for any $X \in \cT_n$ and $M \in \cC_n$, we have 
    $X[j] \geq n - i$ if and only if $M_X(i, j) = 1$, and
    $M(i, j) = 1$ if and only if $X_M[j] \geq n - i$.
    
    Thus, any algorithm $\cA$ for identifying a list in $\cT_n$ can be turned into an algorithm, with the exact same number of queries, for identifying a matrix in $\cC_n$ by replacing every query ``Is $X[j] \geq i$?'' to the hidden list $X$ with the query ``Is $M(n-i,j) = 1$?'' to the hidden matrix $M$. 

    The argument to show that an algorithm for identifying a matrix in $\cC_n$ can be turned into an algorithm for identifying a list in $\cT_n$ is similar, and we omit it.
\end{proof}

%%%%%%%%%%%%%%%%%%%%%%%%%%%%%%%%%%%%%%%%%%%%%%%%%%%%%%%
\subsubsection{Perfectly interleaved bipartite sorting}\label{sec:perfectly-interleaved}
%%%%%%%%%%%%%%%%%%%%%%%%%%%%%%%%%%%%%%%%%%%%%%%%%%%%%%%
Let $R'$ and $B'$ be two $n$-element lists consisting of pairwise distinct numbers such that when the elements from both lists are sorted there are no two consecutive elements from the same list. Without loss of generality, one may assume that $R'$ consists of all even numbers in $[2n]$ and $B'$ consists of all odd numbers in $[2n]$; the elements in the two lists are ordered arbitrarily. 
The \emph{perfectly interleaved bipartite sorting} problem is to sort the elements in the two lists into one total order by making comparisons only between elements from different lists; in other words, an algorithm that solves this problem can only use an oracle that answers queries 
``Is $R'[i] > B'[j]$?'' for any $i,j \in [n]$. 

From now on, we will work with an equivalent representation of $R'$, $B'$ by two lists $R$ and $B$ each consisting of the numbers in $[n]$, where for every $i \in [n]$ we have $R[i]=s$ if and only if $R'[i] = 2s$, and $B[i]=s$ if and only if $B'[i] = 2s-1$. In particular, we have $R[i] \geq B[j]$ if and only if $R'[i] > B'[j]$. This is precisely the classical nuts-and-bolts setting, where an algorithm does not have access to equality checks.

The cost of the algorithm is the number of queries made to the oracle in the worst case.
Let $\cB_n$ denote the family of all pairs of lists $(R,B)$ as above.
Using the query complexity framework from Section \ref{subsec:prelims-query-complexity},
we define the deterministic, randomized, and quantum query complexities of bipartite sorting on instances from $\cB_n$.
We denote them $\sD^{>}(\cB_n)$, $\sR^{>}(\cB_n)$, and $\sQ^{>}(\cB_n)$, respectively.

We claim that the query complexity of identifying a hidden half graph from $\cH_n$ is the same as that of bipartite sorting of instances from $\cB_n$.

\begin{lemma}\label{lem:bip-sorting-thr-equiv-permuted}
    Let $n$ be a positive integer. Then for all $\sC \in \cbra{\sD, \sR, \sQ}$,
    $$
    \sC^{>}(\cB_n) = \sC(\cH_n).
    $$
\end{lemma}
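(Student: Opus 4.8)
The plan is to establish, exactly as in the proof of Lemma~\ref{lem:sorting-thr-equiv-colpermuted}, a bijection between instances $(R,B) \in \cB_n$ and matrices $M \in \cH_n$ under which bipartite comparison queries correspond precisely to edge queries. First I would fix the natural encoding: given $(R,B) \in \cB_n$, define an $n \times n$ matrix $M_{(R,B)}$ by setting $M_{(R,B)}(i,j) = 1$ if and only if $R[i] \geq B[j]$. I claim $M_{(R,B)} \in \cH_n$. To see this, sort the rows by the value $R[i]$ and the columns by the value $B[j]$; after this row and column permutation, the $(i,j)$ entry becomes $1$ iff the $i$-th smallest red value is at least the $j$-th smallest blue value. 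Because the interleaving is perfect --- in sorted order the $2n$ elements alternate $B[1] < R[1] < B[2] < R[2] < \cdots < B[n] < R[n]$ in the $[n]$-representation (using that $R'$ are evens and $B'$ are odds) --- the sorted-and-permuted matrix is exactly $L_n$. Hence $M_{(R,B)}$ is a row/column permutation of $L_n$, i.e.\ lies in $\cH_n$.

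Next I would argue the map is a bijection. Injectivity: from $M_{(R,B)}$ one recovers, for each row $i$, the number of $1$s, which equals $|\{j : B[j] \le R[i]\}|$; since the blue values are a permutation of $[n]$ this quantity equals $R[i]$ (in the $[n]$-representation), so $R$ is determined, and symmetrically the number of $0$s in column $j$ determines $B[j]$. Surjectivity: given $M \in \cH_n$, let $R[i]$ be the number of $1$s in row $i$ and $B[j]$ be the number of $0$s in column $j$ plus $1$ (equivalently, $n$ minus the number of $1$s... I should pick the normalization that makes the counts land in $[n]$ and be permutations; the perfect-interleaving structure of $L_n$ under permutation guarantees exactly this). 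One then checks $M = M_{(R,B)}$ and $(R,B) \in \cB_n$. I would spell out the normalization carefully here since this is the one place the ``perfectly interleaved'' hypothesis is actually used --- it is what rules out ties and forces the row-sums and column-``zero-counts'' to each be a permutation of $[n]$ rather than an arbitrary multiset.

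Finally, the query equivalence is immediate from the definition: the comparison query ``Is $R[i] \geq B[j]$?'' to the instance $(R,B)$ has the same answer as the edge query ``Is $M_{(R,B)}(i,j) = 1$?'' to the corresponding matrix, and conversely. Therefore any deterministic (resp.\ randomized, quantum) algorithm solving bipartite sorting on $\cB_n$ yields one of the same cost for identifying a hidden matrix in $\cH_n$ by translating each query through the bijection, and vice versa --- note the bijection is a fixed, input-independent relabeling of queries and outputs, so it commutes with randomness and with quantum superpositions over queries. This gives $\sC^{>}(\cB_n) = \sC(\cH_n)$ for all $\sC \in \{\sD, \sR, \sQ\}$.

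The main obstacle I anticipate is purely bookkeeping: getting the index conventions right so that ``number of $1$s in a row'' and ``number of $0$s in a column'' genuinely range over $[n]$ and are each a permutation of $[n]$, and confirming that the perfectly-interleaved condition is both necessary and sufficient for this. Once the correct normalization is pinned down, every other step is a one-line verification, and the reduction between query models is identical in spirit to the one already carried out in Lemma~\ref{lem:sorting-thr-equiv-colpermuted}, so I would keep that part brief and possibly just say it is analogous.
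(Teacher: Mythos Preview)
Your proposal is correct and follows essentially the same approach as the paper: define $M_{(R,B)}(i,j)=1$ iff $R[i]\ge B[j]$, verify this lies in $\cH_n$, and observe that comparison queries and edge queries coincide under this correspondence. The only cosmetic difference is that the paper parametrizes the map by the permutations $\pi_R,\pi_B$ directly (setting $M_{(R,B)}$ to be $L_n$ with rows permuted by $\pi_R^{-1}$ and columns by $\pi_B^{-1}$), which makes bijectivity immediate since $|\cB_n|=(n!)^2=|\cH_n|$ and the map is surjective by definition of $\cH_n$; you instead recover the bijection via row-sums and column-zero-counts, which is a slightly longer but equally valid route.
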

\begin{proof}
    The proof strategy is similar to that of Lemma \ref{lem:sorting-thr-equiv-colpermuted}, i.e., we establish a bijection between $\cB_n$ and $\cH_n$, and show that the queries are equivalent.

    For the bijection, for every instance $(R,B) \in \cB_n$ we define a pair of permutations $\pi_R$ and $\pi_B$ of $[n]$ such that $\pi_R(i) = R[i]$ and $\pi_B(i) = B[i]$ for every $i \in [n]$. Our bijection associates to $(R,B)$ the matrix $M_{(R,B)} \in \cH_n$ that is obtained from $L_n$ by permuting its rows according to the permutation $\pi_R^{-1}$ and permuting its columns according to the permutation $\pi_B^{-1}$. For a matrix $M \in \cH_n$, we denote by $(R_M,B_M)$ the instance in $\cB_n$ corresponding to $M$ under this bijection. 
    
    By definition, we have $M_{(R,B)}(i,j) = L_n(\pi_R(i),\pi_B(j)) = L_n(R[i],B[j])$ for all $i,j \in [n]$. Since $L_n(i,j) = 1$ if and only if $i \geq j$, we have that $M_{(R,B)}(i,j) = 1$ if and only if $R[i] \geq B[j]$. Thus, the query ``Is $R[i] \geq B[j]$?'' on instance $(R,B) \in \cB_n$ is equivalent to the query ``Is $M_{(R,B)}(i,j) = 1$?'' on $M_{(R,B)} \in \cH_n$.
    Therefore, any algorithm $\cA$ for solving the perfectly interleaved bipartite sorting problem on instances in $\cB_n$ can be turned into an algorithm, with the exact same number of queries, for identifying a matrix in $\cH_n$ by replacing every query ``Is $R[i] \geq B[j]$?'' to the lists $(R,B) \in \cB_n$ with the query ``Is $M(i,j) = 1$?'' to the hidden matrix $M \in \cH_n$. 

    The argument to show that an algorithm for identifying a matrix in $\cH_n$ can be turned into an algorithm for solving the perfectly interleaved bipartite sorting problem on instances in $\cB_n$ is similar, and we omit it.
\end{proof}

%%%%%%%%%%%%%%%%%%%%%%%%%%%%%%%%%%%%%%%%%%%%%%%%%%%%%%%
\subsection{Learning column-permuted half graphs}\label{subsec:column-permuted-half-graph-learning}
%%%%%%%%%%%%%%%%%%%%%%%%%%%%%%%%%%%%%%%%%%%%%%%%%%%%%%%

In this section, we prove Theorem~\ref{thm-col-permuted-half-graphs}, restated below.
\mainResultColPermutedHalfGraphs*

We require the adversary method, due to Ambainis~\cite{Amb02}, to show our quantum query lower bounds here.
\begin{lemma}[\cite{Amb02}]\label{lem:adversary}
    Let $\cD \subseteq \zone^n$, let $R$ be a finite set, and $f : \cD \to R$ be a function. Let $X, Y \subseteq D$ be two sets of inputs such that $f(x) \neq f(y)$ if $x \in X$ and $y \in Y$. Let $\cR \subseteq X \times Y$ be nonempty, and satisfy:
    \begin{itemize}
        \item For every $x \in X$ there exist at least $m$ different $y \in Y$ such that $(x, y) \in \cR$.
        \item For every $y \in Y$ there exist at least $m'$ different $x \in X$ such that $(x, y) \in \cR$.
        \item For every $x \in X$ and $i \in [n]$, there are at most $\ell$ different $y \in Y$ with $(x, y) \in \cR$ and $x_i \neq y_i$.
        \item For every $y \in Y$ and $i \in [n]$, there are at most $\ell'$ different $x \in X$ with $(x, y) \in \cR$ and $x_i \neq y_i$.
    \end{itemize}
    Then any quantum query algorithm that computes $f$ with success probability at least $2/3$ uses $\Omega\rbra{\sqrt\frac{m m'}{\ell\ell'}}$ quantum queries.
\end{lemma}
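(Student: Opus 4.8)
The plan is to prove the three upper bounds with a single binary-search argument, the deterministic and randomized lower bounds with the counting lemma, and the quantum lower bound with Ambainis' adversary method (Lemma~\ref{lem:adversary}).

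\textbf{Upper bounds.} Every $M \in \cC_n$ has each column of the form $0^{n-c}1^{c}$ for some $c \in [n]$ --- a block of $0$s on top of a block of $1$s --- so reading a fixed column $j$ from top to bottom gives a monotone string, and the number $c_j$ of $1$s in it (equivalently, the position of its topmost $1$) is found by binary search on the row index using $\lceil \log_2(n+1)\rceil$ edge queries. Running this for all $n$ columns reconstructs $M$ with $O(n\log n)$ deterministic queries, and since $\sQ(\cC_n) \le \sR(\cC_n) \le \sD(\cC_n)$ this gives the $O(n\log n)$ upper bound in all three settings.

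\textbf{Deterministic and randomized lower bounds.} The columns of $L_n$ have pairwise distinct Hamming weights, so distinct column permutations yield distinct matrices and $|\cC_n| = n!$; hence $\mathsf{Id}_{\cC_n}\colon \cC_n \to \cC_n$ is surjective, and Lemma~\ref{lem:large-range-query-lower-bound} gives $\sD(\cC_n), \sR(\cC_n) = \Omega(\log n!) = \Omega(n\log n)$ by Stirling.

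\textbf{Quantum lower bound.} I would apply Lemma~\ref{lem:adversary}. For a permutation $\sigma$ of $[n]$ let $M_\sigma \in \cC_n$ be the matrix whose $j$-th column has exactly $\sigma(j)$ ones; this is a bijection between permutations and $\cC_n$. Take $X = \{M_\sigma : \sigma \text{ even}\}$ and $Y = \{M_\sigma : \sigma \text{ odd}\}$, so that for any $x \in X$, $y \in Y$ we have $x \neq y$ and hence $f(x) \neq f(y)$ for $f = \mathsf{Id}_{\cC_n}$. Put $(M_\sigma, M_\tau) \in \cR$ exactly when $M_\tau$ is obtained from $M_\sigma$ by interchanging the (contents of the) two columns whose column-sums are $c$ and $c+1$, for some $c \in [n-1]$; since this is a transposition of $\sigma$, parity flips and $\cR \subseteq X \times Y$. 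For fixed $M_\sigma$ there are exactly $n-1$ choices of $c$, each giving a distinct matrix, so $m = m' = n-1$. The crucial computation: interchanging column-values $c$ and $c+1$ alters $M_\sigma$ in precisely the two entries $(n-c,\ \sigma^{-1}(c))$ and $(n-c,\ \sigma^{-1}(c+1))$; therefore, for a fixed entry position $(i,j)$, the only value of $c$ that can make a related matrix differ from $M_\sigma$ at $(i,j)$ is $c = n-i$, and even then only if $j$ is one of those two columns, so at most one related matrix differs at $(i,j)$, i.e.\ $\ell = \ell' = 1$. Lemma~\ref{lem:adversary} then yields $\sQ(\cC_n) = \Omega\!\bigl(\sqrt{(n-1)(n-1)/(1\cdot 1)}\bigr) = \Omega(n)$.

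\textbf{Main obstacle.} The upper bound and the counting lower bound are routine; the delicate point is the adversary construction. One must (i) split into even/odd permutations so the hypothesis $f(x)\neq f(y)$ of Lemma~\ref{lem:adversary} holds, (ii) restrict the relation to swaps of \emph{consecutive} column-values (arbitrary column swaps would change many more entries and wreck the $\ell=\ell'=1$ bound), and (iii) verify the localisation claim that each matrix entry is touched by at most one relation edge. Once these are in place the bound is immediate. I do not expect to be able to push past $\Omega(n)$ with this method: as remarked in the introduction, adapting the comparison-based quantum sorting lower bound of~\cite{HNS02} to the threshold/edge-query setting does not seem to go through, so the gap between $\Omega(n)$ and $O(n\log n)$ is left open.
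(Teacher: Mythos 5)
Your proposal does not actually prove the statement in question. The statement is Lemma~\ref{lem:adversary} itself --- Ambainis' quantum adversary bound, a claim about \emph{every} quantum query algorithm computing $f$. Proving it requires the progress-measure argument of~\cite{Amb02}: one tracks a quantity such as $W^t = \sum_{(x,y)\in\cR} \abs{\braket{\psi_x^t}{\psi_y^t}}$ over the course of the algorithm, shows it starts at $\abs{\cR}$ and must decrease to a constant fraction of that for the final measurement to distinguish $x$ from $y$ with probability $2/3$, and bounds the per-query decrease by $O\rbra{\sqrt{\ell\ell'/(mm')}}\cdot\abs{\cR}$ via a Cauchy--Schwarz argument over the query register. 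None of this appears in your writeup; you instead take the lemma as given and use it. (For what it is worth, the paper does not reprove the lemma either --- it is cited directly from~\cite{Amb02} --- so there is no internal proof to match against; but what you wrote is not a candidate proof of it.)

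What you have written is, in substance, a proof of Theorem~\ref{thm-col-permuted-half-graphs}, and as such it tracks the paper's argument for that theorem almost exactly: binary search per column for the upper bounds, Lemma~\ref{lem:large-range-query-lower-bound} with $\abs{\cC_n}=n!$ for the classical lower bounds, and the relation pairing matrices whose columns of Hamming weights $c$ and $c+1$ are interchanged, giving $m=m'=n-1$ and $\ell=\ell'=1$. Your even/odd-permutation split into $X$ and $Y$ is a small refinement the paper leaves implicit (the lemma formally requires $f(x)\neq f(y)$ across the two sets, which for the identity function just means $X\cap Y=\emptyset$, and a transposition indeed flips parity), and your localisation of the two changed entries to row $n-c$ is correct. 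But to answer the question actually posed, you would need to supply the adversary argument itself, not an application of it.
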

\begin{proof}[Proof of Theorem~\ref{thm-col-permuted-half-graphs}]
    The upper bounds follow easily from the fact that each column can be learned with  $O(\log n)$ queries using binary search to find the top-most entry of the column that is 1.

    For the deterministic and randomized lower bounds, first recall that $\sC(\cC_n) = \sC(\mathsf{Id}_{\cC_n})$ for all $\sC \in \cbra{\sD, \sR, \sQ}$, where $\mathsf{Id}_{\cC_n} : \cC_n \to \cC_n$ is the identity function.
    Second, note that each query in an algorithm has at most two possible outcomes (an element of $\zone$), and that $|\cC_n| = n!$. Thus, Lemma~\ref{lem:large-range-query-lower-bound} implies that $\sC(\cC_n) = \Omega(\log n!) = \Omega(n \log n)$ for $\sC \in \cbra{\sD, \sR}$.

    For the quantum lower bound, we use Lemma~\ref{lem:adversary}. Define 
    \begin{align*}
        \cR = \{(M, M') : M' \textnormal{ can be obtained from } M \textnormal{ by interchanging the column}\\
        \textnormal{of Hamming weight } j \textnormal{ with that of Hamming weight } j + 1 \textnormal{ for some } j \in [n - 1]\}.
    \end{align*}
    For a fixed $M$, there are $n - 1$ many $M'$s with $(M, M') \in \cR$ because there are $n - 1$ choices of $j$. Thus, $m = n - 1$. Similarly, we have $m' = n - 1$. For a fixed $M$ and $(i, j)$, there is at most one $M'$ with $(M, M') \in \cR$ and $M(i, j) \neq M'(i, j)$: this is because if $M(i,j) = 0$, the only candidate $M'$ must swap the $j$'th column of $M$ with the column with Hamming weight one higher than that of the $j$'th column. Thus, $\ell = 1$. Similarly we have $\ell' = 1$. Thus, Lemma~\ref{lem:adversary} implies $\sQ(\cC_n) = \Omega(n)$.
\end{proof}

%%%%%%%%%%%%%%%%%%%%%%%%%%%%%%%%%%%%%%%%%%%%%%%%%%%%%%%
\subsection{Learning general half graphs}\label{subsec:row-column-permuted-half-graph-learning}
%%%%%%%%%%%%%%%%%%%%%%%%%%%%%%%%%%%%%%%%%%%%%%%%%%%%%%%

In this section, we prove Theorem \ref{thm-general-half-graphs}.
For our quantum upper bound, we require the subroutine given by the following lemma.
Below, for bit-strings $x, y$, let $x \oplus y$ denote the bitwise XOR of $x$ and $y$.
\begin{lemma}\label{lem: bbht useful}
    Let $n$ be a positive integer and let $x \neq y \in \zone^n$ be such that either $x \leq y$ or $x \geq y$ entrywise. Then there exists a quantum query algorithm (with query access to the entries of $x$ and $y$) that uses $O\left(\sqrt{\frac{n}{|x \oplus y|}}\right)$ queries on expectation, and outputs the following:
    \begin{itemize}
        \item $x \leq y$ and an index $i \in [n]$ with $x_i = 0$ and $y_i = 1$, or
        \item $x \geq y$ and an index $i \in [n]$ with $x_i = 1$ and $y_i = 0$.
    \end{itemize}
\end{lemma}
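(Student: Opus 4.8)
The plan is to reduce the problem to a single application of the Boyer--Bruno--H\o yer--Tapp (BBHT) algorithm from Lemma~\ref{lem:bbht}, applied to the string $z := x \oplus y \in \zone^n$. The key observation is that under the promise, the set of indices where $x$ and $y$ differ is exactly the set of ``interesting'' indices we want to find: if $x \leq y$ entrywise, then $x_i \neq y_i$ forces $x_i = 0$ and $y_i = 1$, and symmetrically if $x \geq y$. So the whole task is to (i) decide which of the two cases $x \leq y$ or $x \geq y$ holds, and (ii) produce one index in the support of $z$. Since $z \neq 0^n$, Lemma~\ref{lem:bbht} hands us an index $i$ with $z_i = 1$ using $O(\sqrt{n/|z|}) = O(\sqrt{n/|x\oplus y|})$ queries in expectation, where each evaluation of $z_i$ costs two queries (one to $x_i$, one to $y_i$).

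First I would spell out the implementation. Run the BBHT algorithm of Lemma~\ref{lem:bbht} with the oracle for $z_i = x_i \oplus y_i$ simulated by one query to $x_i$ and one query to $y_i$; this only doubles the query count and hence preserves the $O(\sqrt{n/|x\oplus y|})$ expected bound. When it returns an index $i$ with $z_i = 1$, make two more (classical) queries to read $x_i$ and $y_i$ explicitly. By the promise, exactly one of the two patterns $(x_i, y_i) = (0,1)$ or $(x_i,y_i) = (1,0)$ occurs; the former certifies $x \leq y$ and the latter certifies $x \geq y$ (these are mutually exclusive because $x \neq y$ together with the entrywise comparability promise forces all differing coordinates to point the same way). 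Output the corresponding case together with $i$. The two extra queries are absorbed into the $O(\sqrt{n/|x\oplus y|})$ bound since $|x \oplus y| \leq n$ implies $\sqrt{n/|x\oplus y|} \geq 1$.

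The only genuinely delicate points are bookkeeping ones rather than mathematical obstacles. One is the expected-query accounting: Lemma~\ref{lem:bbht} is stated in the expected-cost model, and we must note that composing it with the two-query XOR simulation and appending $O(1)$ final queries keeps us in that model with the same asymptotic bound. The other is making sure the promise is used correctly: I would state explicitly that if $x \leq y$ (entrywise) and $x \neq y$, then every index $i$ with $x_i \neq y_i$ satisfies $x_i = 0, y_i = 1$, so reading any single differing coordinate both reveals the direction of the inequality and supplies a witness index; the case $x \geq y$ is symmetric. There is no real hard part here — the lemma is essentially a wrapper around Lemma~\ref{lem:bbht}, and the main thing to get right is that a single found difference simultaneously resolves the dichotomy and yields the required index.
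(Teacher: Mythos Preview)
Your proposal is correct and follows essentially the same approach as the paper: apply Lemma~\ref{lem:bbht} to $x \oplus y$ (simulating each query with two queries to $x$ and $y$), then query $x_i$ (or $y_i$) at the returned index to determine which of $x \leq y$ or $x \geq y$ holds. Your write-up is slightly more detailed in its bookkeeping, but the argument is identical.
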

We remark that the algorithm need not know $x \oplus y$ in advance.
This lemma follows as a relatively straightforward application of Lemma~\ref{lem:bbht}: use the algorithm from Lemma~\ref{lem:bbht} on $x \oplus y$ (the bitwise XOR of $x$ and $y$). The expected number of queries made by this algorithm is $O\left(\sqrt{\frac{n}{|x \oplus y|}}\right)$ since querying the $i$'th bit of $x \oplus y$ can be done using 2 queries. The algorithm outputs an index $i \in [n]$ with $x_i \neq y_i$. Since we are promised that either $x \leq y$ or $x \geq y$, querying the value of $x_i$ (or $y_i$) tells us which of these is the case.

Using a similar argument with Lemma~\ref{lem:sample} instead of Lemma~\ref{lem:bbht}, we obtain the following classical analog of Lemma~\ref{lem: bbht useful}.
\begin{lemma}\label{lem: sample useful}
    Let $n$ be a positive integer and let $x \neq y \in \zone^n$ be such that either $x \leq y$ or $x \geq y$ entrywise. Then there exists a randomized query algorithm (with query access to the entries of $x$ and $y$) that uses $O\left({\frac{n}{|x \oplus y|}}\right)$ queries on expectation, and outputs the following:
    \begin{itemize}
        \item $x \leq y$ and an index $i \in [n]$ with $x_i = 0$ and $y_i = 1$, or
        \item $x \geq y$ and an index $i \in [n]$ with $x_i = 1$ and $y_i = 0$.
    \end{itemize}
\end{lemma}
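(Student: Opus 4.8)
The plan is to reduce directly to Lemma~\ref{lem:sample}, applied to the bitwise XOR string $z := x \oplus y \in \zone^n$. First I would observe that $z \neq 0^n$ since $x \neq y$, so Lemma~\ref{lem:sample} applies: it yields a randomized query algorithm that, given query access to the bits of $z$, returns an index $i \in [n]$ with $z_i = 1$ using $O(n/|z|)$ queries in expectation. Since $|z| = |x \oplus y| \geq 1$, this quantity is well-defined and equals the target bound $O(n/|x \oplus y|)$.

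Next I would note that query access to $z$ can be simulated from query access to $x$ and $y$: to answer a query to $z_i$, query both $x_i$ and $y_i$ and return their XOR. This costs two edge queries per simulated query, so running the algorithm of Lemma~\ref{lem:sample} on $z$ in this way costs at most $2 \cdot O(n/|x \oplus y|) = O(n/|x \oplus y|)$ queries in expectation.

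Finally I would convert the returned index into the required output. The index $i$ satisfies $z_i = 1$, i.e.\ $x_i \neq y_i$; moreover $x_i$ and $y_i$ have already been queried while simulating the query to $z_i$, so no further queries are needed. If $(x_i, y_i) = (0,1)$, then since $x$ and $y$ are comparable entrywise and differ, a coordinate at which $x$ lies strictly below $y$ is consistent only with $x \le y$, so I output ``$x \le y$'' together with $i$. Symmetrically, if $(x_i, y_i) = (1,0)$ I output ``$x \ge y$'' together with $i$.

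The argument is essentially immediate and I do not anticipate a genuine obstacle; the only points needing a sentence of care are the constant-factor overhead from simulating a $z$-query by two queries (handled above) and the observation that a single differing coordinate, together with the comparability promise, already determines which of $x \le y$, $x \ge y$ holds — so the algorithm never needs to verify the ordering globally. Correctness and the expected query bound then follow directly from Lemma~\ref{lem:sample}.
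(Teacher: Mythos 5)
Your proposal is correct and matches the paper's argument: the paper likewise runs the sampler of Lemma~\ref{lem:sample} on $x \oplus y$, simulating each bit query by two queries to $x$ and $y$, and then uses the comparability promise plus the value of $x_i$ at the returned index to decide which of $x \le y$ or $x \ge y$ holds. No gaps.
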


Before proceeding to the proof of Theorem~\ref{thm-general-half-graphs}, we remark that  
an argument similar to that in the proof of Lemma~\ref{lem:matchinglb-det} cannot be used to show a deterministic lower bound for learning graphs from $\cH_n$. Specifically, we show that it is possible for the all-0 path (a similar argument holds for the all-1 path) of a deterministic algorithm to have length $O(n)$. We show this in Appendix \ref{appedix-rule-out-lower-bound-approach}.

\mainResultGeneralHalfGraphs*

\begin{proof}    
    Our randomized and quantum algorithms are modifications of quicksort; they have the same structure and differ only in one step, so we describe these algorithms together. In short, we find the row permutation and column permutation separately. Our algorithm to find the row permutation (the algorithm to find the column permutation is essentially the same and we omit this\footnote{Once the algorithm finds the row permutation, there is a much easier algorithm to find the column permutation: use a $\log n$-cost binary search for each column to determine its position. Since this does not affect the complexity of the overall algorithm, we do not discuss this in more detail here.}) has the same structure as the randomized QuickSort algorithm, where the elements in the list are the entire rows (so querying an element takes cost $n$). Just as in the textbook randomized quicksort algorithm, we randomly choose a pivot row, compare all other rows with the pivot to figure out which rows are smaller than and which rows are larger than the pivot. The comparisons with the pivot are done using the algorithm from Lemma~\ref{lem: bbht useful} in the quantum algorithm and using the algorithm from Lemma~\ref{lem: sample useful} in the randomized algorithm.\footnote{Since all pairwise rows are unequal throughout the algorithm, Lemmas~\ref{lem: bbht useful} and~\ref{lem: sample useful} can be applied.} We then recurse into each subproblem. The analysis of our algorithm is a modification of a textbook analysis of randomized QuickSort.
        \begin{algorithm}[h]
        \begin{algorithmic}[1]
            \State \textbf{Input:} List of pairwise unequal rows $R = [R_1, \dots, R_n]$ with each $R_i \in \zone^n$, and the promise that for all $i, j \in [n]$, either $R_i \leq R_j$ or $R_j \leq R_i$ entrywise.
            \If{$n = 1$} 
                \State Return $R$
            \Else
                \State $i \gets$ uniformly random element of $[n]$\label{line:random}
                \Comment{Pick a random row to be the `pivot'}
                \State $L \gets \emptyset$, $G \gets \emptyset$
                \Comment{$L$ and $G$ will store rows smaller and larger than $R_i$, respectively}
                \State Query $R_i$ \label{line:queryrow}
                \Comment{This uses $n$ queries}
                \For{$j \in [n] \setminus \cbra{i}$}\label{line: for loop}
                    \State Compare $R_j$ with $R_i$\label{line:compare}
                    \Comment{Lemma~\ref{lem: bbht useful} (quantum), or Lemma~\ref{lem: sample useful} (randomized)}\label{line: comparison}
                    \If{$R_j < R_i$}
                        \State Remove 0-indices of $R_i$ \label{line:remove0}
                        \Comment{All 0's in $R_i$ must be 0's in $R_j$}
                        \State $L \gets L \cup \cbra{R_j}$
                    \Else
                    \Comment{$R_j > R_i$}
                        \State Remove 1-indices of $R_i$ \label{line:remove1}
                        \Comment{All 1's in $R_i$ must be 1's in $R_j$}
                        \State $G \gets G \cup \cbra{R_j}$
                    \EndIf
                \EndFor
                \State Return $\cA(L), R_i, \cA(G)$\label{line:recurse}
                \Comment{We know the position of $R_i$ now. Recursively solve $L, G$}
            \EndIf
        \caption{Query algorithm $\cA$ for finding row permutation}
        \label{algo:qqrowpermutation}
        \end{algorithmic}
        \end{algorithm}
        Lines~\ref{line:remove0} and~\ref{line:remove1} ensure that the problems that we recurse into correspond to square matrices.
        Figure \ref{fig:pivot} depicts a single iteration of the \textbf{for} loop in Line~\ref{line: for loop} in Algorithm~\ref{algo:qqrowpermutation}. As we know the entire pivot row from Line~\ref{line:queryrow}, by permuting columns we may assume that the pivot row has the form $1^k0^{n-k}$. Observe that all rows smaller the pivot must have 0's in all the positions where the pivot has 0's. Similarly, all rows larger than the pivot must have 1's in all the positions where the pivot has 1's. Thus, the problems $L$ and $G$ are smaller instances of the original problem. We now analyze the expected query complexity of Algorithm~\ref{algo:qqrowpermutation}.
            
%%%%%%%%%%%%%%%%%%%%%%%%%%%%%%%%%%%%%%%%%%%%%%%%%
    \begin{figure}[h!]
        \centering
        \begin{tikzpicture}
        
        % Create the matrix
        \matrix (m) [matrix of nodes,left delimiter={[},right delimiter={]},ampersand replacement=\&,
                     nodes in empty cells, nodes={minimum width=1.5em, minimum height=1.5em, anchor=center}] {
          \& \& \& \& \& \& \& \\
          \& \& \& \& \& \& \& \\
          \& \& \& \& \& \& \& \\
          \& \& \& \& \& \& \& \\
          \& \& \& \& \& \& \& \\
          \& \& \& \& \& \& \& \\
          \& \& \& \& \& \& \& \\
        };
        
        % Fill the pivot row with 1s and 0s
        \node at (m-4-1) {1};
        \node at (m-4-2) {1};
        \node at (m-4-3) {...};
        \node at (m-4-4) {1};
        \node at (m-4-5) {0};
        \node at (m-4-6) {0};
        \node at (m-4-7) {...};
        \node at (m-4-8) {0};
        
        % Add big letters L, G, 0, 1 in the middle of respective blocks
        \node at (m-2-2) {\huge $L$};
        \node at (m-6-7) {\huge $G$};
        \node at (m-2-7) {\huge $0$};
        \node at (m-6-2) {\huge $1$};
        
        % Draw the partition lines
        \draw[solid, gray!50] (m-3-1.south west) -- (m-3-8.south east); % Horizontal line
        \draw[solid, gray!50] (m-4-1.south west) -- (m-4-8.south east); % Horizontal line
        \draw[solid, gray!50] (m-1-4.north east) -- (m-7-4.south east); % Vertical line

        % Add "Pivot ->" text with arrow
        \node[left=0.5cm of m-4-1] {Pivot $\rightarrow$};
        
        \end{tikzpicture}
        \caption{Illustration of a single iteration of the \textbf{for} loop in Line~\ref{line: for loop} in Algorithm~\ref{algo:qqrowpermutation}.}
        \label{fig:pivot}
    \end{figure}
%%%%%%%%%%%%%%%%%%%%%%%%%%%%%%%%%%%%%%%%%%%%%%%%%

    \paragraph*{Quantum algorithm:} Let $T(n)$ denote the expected query complexity of finding the hidden row permutation of a row-permutation of $L_n$ (where the columns may be arbitrarily permuted as well). By Algorithm~\ref{algo:qqrowpermutation}, each of the pivots is chosen with equal likelihood. We have
    \begin{equation}\label{eqn: recurrence}
        T(n) = \frac{1}{n}\sum_{i = 1}^n\left[n + \sum_{j = 1}^{i - 1}O\left(\sqrt{\frac{n}{i - j}}\right) + \sum_{j = i + 1}^nO\left(\sqrt\frac{n}{j - i}\right) + T(i - 1) + T(n - i)\right].
    \end{equation}
    Here, the outermost $1/n$ corresponds to the probability of each row being chosen as the pivot in Line~\ref{line:random}. The outermost summand being set to $i$ corresponds to the event of the $i$'th row (in the original matrix $L_n$ before applying the hidden permutation) being chosen as the pivot. The first term of $n$ is the cost of querying the entire row in Line~\ref{line:queryrow}. Next, recall that each row is compared with the pivot in the algorithm exactly once (Line~\ref{line:compare}). Lemma~\ref{lem: bbht useful} shows that the comparison between $R_i$ and $R_j$ can be done with an expected quantum query complexity of $\sqrt{\frac{n}{|R_i \oplus R_j|}}$. The next two terms in Equation~\eqref{eqn: recurrence} arise due to these comparisons since there is a unique row of every possible Hamming weight from 1 to $n$ in the matrix. The final  two terms, $T(i - 1)$ and $T(n - i)$, are the costs of the algorithm run recursively on $L$ and $G$, respectively (Line~\ref{line:recurse}).

    Replacing the second and third terms in the brackets of Equation~\eqref{eqn: recurrence} by a naive upper bound of $O\left(\sum_{j = 1}^n \sqrt{\frac{n}{j}}\right)$ each, we obtain    
    \begin{align*}
        T(n) & = \frac{1}{n}\sum_{i = 1}^n\left[n + O\left(\sum_{j = 1}^{n}\sqrt{\frac{n}{j}}\right) + T(i - 1) + T(n - i)\right]\\
        & = \frac{1}{n}\sum_{i = 1}^n\left[n + O(n) + T(i - 1) + T(n - i)\right] = \frac{1}{n}\sum_{i = 1}^n\left[O(n) + T(i - 1) + T(n - i) \right].
    \end{align*}
    Above, the second equality follows since $\sum_{j = 1}^n \frac{1}{\sqrt{j}} = \Theta(\sqrt{n})$ using standard techniques. This is now precisely the recurrence that analyzes the expected running time of the randomized QuickSort algorithm and solves to $T(n) = O(n \log n)$. We omit this proof and refer the reader to standard texts or lecture notes for details, see, for example,~\cite{CLRS}.
    
    \paragraph*{Randomized algorithm:} The analysis here follows along the same lines as that for the quantum algorithm so we skip some of the steps that are exactly the same. The only difference between the algorithms is that comparisons between two rows (in Line~\ref{line:compare}) is done slower in the randomized one, using Lemma~\ref{lem: sample useful} in place of Lemma~\ref{lem: bbht useful}. Letting $T(n)$ denote the expected query complexity, we have the analog of Equation~\eqref{eqn: recurrence} as
    \begin{align*}
        T(n) = \frac{1}{n}\sum_{i = 1}^n\left[n + \sum_{j = 1}^{i - 1}O\left(\frac{n}{i - j}\right) + \sum_{j = i + 1}^nO\left(\frac{n}{j - i}\right) + T(i - 1) + T(n - i)\right].
    \end{align*}
    Replacing the second and third terms by a naive upper bound of $O\left(\sum_{j = 1}^n \frac{n}{j}\right)$ each, we obtain
        \begin{align*}
        T(n) & = \frac{1}{n}\sum_{i = 1}^n\left[n + O\left(\sum_{j = 1}^{n}\frac{n}{j}\right) + T(i - 1) + T(n - i)\right]\\
        & = \frac{1}{n}\sum_{i = 1}^n\left[n + O(n \ln n) + T(i - 1) + T(n - i)\right] = \frac{1}{n}\sum_{i = 1}^n\left[O(n \ln n) + T(i - 1) + T(n - i) \right]\\
        & = O(n \ln n) + \frac1n \sum_{i = 1}^n [T(i - 1) + T(n - i)] \leq O(n \ln n) + \frac{2}{n} \sum_{i = 1}^{n - 1} T(i).
    \end{align*}
    Here we used the fact that $\sum_{j = 1}^n \frac1j = O(\ln n)$, where $\ln$ denotes the natural logarithm. Let $k > 0$ be a constant such that
    \begin{equation}\label{eq:T_n-eneq}
        T(n) \leq kn \ln n + \frac{2}{n} \sum_{i = 1}^{n - 1} T(i),                    
    \end{equation}
    We prove by induction on $n$ that $T(n) \leq 2k n\ln^2 n$ holds for all natural $n$.
    Indeed, assuming this inequality holds for all $i < n$, from (\ref{eq:T_n-eneq}) we have 
    \begin{align*}
        T(n) & \leq kn \ln n + \frac{2}{n} \sum_{i = 1}^{n - 1} T(i) \leq k \left( n \ln n + \frac4n \sum_{i = 1}^{n - 1} i \ln^2 i \right) \\
        & \leq k \left( n \ln n + \frac4n \int_{1}^n x \ln^2 x dx \right) \leq k \left( n\ln n + \frac4n \left[\frac{n^2}{4} + \frac{2n^2 \ln^2 n}{4} - \frac{2n^2 \ln n}{4} - \frac14\right] \right)\\
        & \leq k\left( n\ln n + n + 2n \ln^2 n - 2n \ln n - \frac{1}{n} \right) \leq 2kn \ln^2 n.
    \end{align*}
    This proves $T(n) = O(n \log^2 n)$.
\end{proof}

We conclude this section with a tight lower bound on quantum query complexity of learning half graphs.
\begin{lemma}\label{lem:quantum-lower-half-graphs}
    $\sQ(\cH_n) = \Omega(n \log n)$.
\end{lemma}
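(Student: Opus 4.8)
The plan is to reduce ordinary comparison-based sorting to learning a hidden half graph in a query-for-query fashion, and then invoke the quantum lower bound for sorting due to H\o{}yer, Neerbek, and Shi~\cite{HNS02}.

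By Lemma~\ref{lem:bip-sorting-thr-equiv-permuted} it is enough to prove that $\sQ^{>}(\cB_n) = \Omega(n \log n)$, i.e., that perfectly interleaved bipartite sorting on instances from $\cB_n$ has quantum query complexity $\Omega(n\log n)$. Recall that an instance of $\cB_n$ is simply an arbitrary pair of permutations $(R,B)$ of $[n]$, that the query on a pair $(i,j)$ returns the bit $[R[i] \ge B[j]]$, and that the algorithm must output the induced total order (equivalently, identify $(R,B)$), so in particular it recovers $R$. The reduction I would use is: given a hidden permutation $\sigma$ of $[n]$ (an instance of ordinary sorting, where the comparison query on a pair $(i,j)$ with $i \ne j$ returns $[\sigma(i) > \sigma(j)]$), consider the bipartite-sorting instance $(R,B) \in \cB_n$ with $R = B = \sigma$, i.e., $R[i] = B[i] = \sigma(i)$ for all $i$. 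For $i \ne j$, the query on $(i,j)$ returns $[\sigma(i) \ge \sigma(j)] = [\sigma(i) > \sigma(j)]$, because $\sigma$ is injective; this is exactly a comparison query. For $i = j$, the query returns the constant $1$.

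Thus any quantum algorithm that solves bipartite sorting on $\cB_n$ using $q$ queries yields a quantum algorithm that sorts $n$ elements using at most $q$ comparison queries: run the bipartite-sorting algorithm on the implicitly defined instance $R = B = \sigma$; each oracle call on an off-diagonal pair $(i,j)$ is answered by one call to the comparison oracle, and each call on a diagonal pair $(i,i)$ is answered by XOR-ing the answer register with the known constant $1$, which costs no queries; the list $R$ returned by the bipartite-sorting algorithm equals $\sigma$. Since sorting $n$ elements requires $\Omega(n\log n)$ quantum comparison queries by~\cite[Theorem~2]{HNS02}, we get $q = \Omega(n\log n)$, hence $\sQ^{>}(\cB_n) = \Omega(n\log n)$, and therefore $\sQ(\cH_n) = \Omega(n\log n)$ by Lemma~\ref{lem:bip-sorting-thr-equiv-permuted}.

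The only point that needs care is the translation between query models --- that the edge-query oracle for $\cH_n$ restricted to the image of this embedding agrees with the comparison oracle except on the (information-free) diagonal, which a quantum algorithm can simulate at zero cost --- so that the reduction is genuinely query-preserving. An alternative, more self-contained route is to replay the adversary argument behind~\cite[Theorem~2]{HNS02} directly against the family $\cH_n$ with the same hard distribution over permutations; this avoids citing the reduction but essentially re-derives the same bound.
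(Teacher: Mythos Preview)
Your proof is correct and follows essentially the same approach as the paper: both restrict attention to the ``diagonal'' sub-instances where $R=B$ (equivalently, matrices obtained from $L_n$ by applying the \emph{same} permutation to rows and columns) and invoke the H\o{}yer--Neerbek--Shi lower bound on that subfamily. The only cosmetic difference is that the paper appeals directly to the internals of~\cite[Section~5]{HNS02} to say the adversary bound already covers this subfamily of $\cH_n$, whereas you route through Lemma~\ref{lem:bip-sorting-thr-equiv-permuted} and give an explicit black-box, query-preserving reduction from comparison sorting; your presentation is a bit more self-contained, but the underlying idea is identical.
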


\begin{proof}
    We require the quantum comparison-based sorting lower bound of $\Omega(n \log n)$ due to H\o yer, Neerbek and Shi~\cite{HNS02}. From the argument in~\cite[Section~5]{HNS02}, one may note that their lower bound applies to the problem of learning a hidden matrix that can be obtained from a lower triangular matrix by applying a permutation to the rows and the same permutation to the columns. Since our class $\cH_n$ includes all these matrices, a lower bound of $\Omega(n \log n)$ follows from~\cite[Theorem~2]{HNS02}.
\end{proof}

\bibliography{bibo}

\appendix

\section{Query lower bound for functions with large range}

In this section, we prove Lemma~\ref{lem:large-range-query-lower-bound}, restated below.
\countingLowerBound*
\begin{proof}  
    We prove the randomized lower bound, which implies the deterministic lower bound. We use Yao's principle~\cite{Yao77}. For each $r \in \cR$, let $d_r \in \cD$ be an arbitrary representative such that $f(d_r) = r$. Define $\cD' = \cbra{d_r : r \in \cR}$. Consider the distribution $\mu$ to be the uniform distribution over all inputs of $\cD'$. For a suitable constant $c > 0$ to be fixed later, it suffices to show that for every deterministic algorithm $\cA$ of query complexity less than $c\log|\cR|$, $\cA$ must make an error on at least a 1/3 fraction of all inputs drawn from $\mu$. Each internal node of the decision tree corresponding to $\cA$ has at most two children: one corresponding to each possible answer to the query in $\zone$. Towards a contradiction, fix an algorithm $\cA$ of query complexity less than $c\log|\cR|$ and that errs on less than a $1/3$ fraction of all inputs. Since each internal node has only two children, the number of leaves of the decision tree representing $\cA$ is at most $2^{c\log|\cR|} < \frac{2}{3}|\cR|$. This means $\cA$ must err on at least $\frac13 |\cR|$ inputs in $\cD'$, and hence make an error of at least $\frac13$ with respect to $\mu$, which is a contradiction. This proves the randomized lower bound.
\end{proof}

\section{Ruling out a lower bound approach for perfectly interleaved bipartite sorting}\label{appedix-rule-out-lower-bound-approach}
In this section we show that the lower bound technique in Lemma~\ref{lem:matchinglb-det} cannot be used to give an $\Omega(n^2)$ lower bound for Theorem~\ref{thm-general-half-graphs}. Specifically, we show that there is a certificate of size $O(n)$ that consists only of 0s. A similar proof also shows existence of a certificate of size $O(n)$ that consists only of 1s, but we omit this here.
\begin{lemma}\label{lem: inductive proof}
    Let $n$ be a positive integer and let $M$ be obtained by applying a row permutation and a column permutation to $L_n$. Then there exists a subset $S$ of the 0-entries of $M$ such that $|S| = O(n)$ and $M$ is the unique matrix that is a row and column permutation of $L_n$ that is consistent with all entries of $S$ being 0.
\end{lemma}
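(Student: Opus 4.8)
The plan is to recast an $\cH_n$ matrix in terms of its row and column Hamming weights, for which a very compact certificate exists. If $M \in \cH_n$ is obtained from $L_n$ by permuting rows and columns, then there are permutations $\sigma,\tau$ of $[n]$ with $M(i,j) = L_n(\sigma(i),\tau(j))$, i.e.\ $M(i,j)=1$ exactly when $\sigma(i)\ge\tau(j)$. Since row $\ell$ of $L_n$ has weight $\ell$ and column $\ell$ has weight $n+1-\ell$, the weight $w_i$ of row $i$ of $M$ equals $\sigma(i)$ and the weight $v_j$ of column $j$ equals $n+1-\tau(j)$; thus $w$ and $v$ are permutations of $[n]$, we have $M(i,j)=1$ iff $w_i+v_j\ge n+1$, and in particular $M$ is determined by the pair $(w,v)$. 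So it is enough to produce $O(n)$ zero-entries of $M$ that pin down $(w,v)$ among all $\cH_n$ matrices.

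For $k\in[n]$ write $r_k$ for the row of $M$ of weight $k$ and $c_k$ for the column of weight $k$. I propose to take the two ``anti-diagonal layers'' of zero-entries nearest the all-$1$ row and column:
$$
S \;=\; \set{(r_k, c_{n-k}) : k \in [n-1]} \;\cup\; \set{(r_k, c_{n-1-k}) : k \in [n-2]}.
$$
Each listed pair $(r_a,c_b)$ has $w_{r_a}+v_{c_b} = a+b \in \set{n-1,\,n}$, so by the characterization above it is a $0$-entry of $M$, and $\abs{S} = (n-1)+(n-2) = O(n)$; in particular $M$ itself is consistent with all of $S$ being $0$.

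Now suppose $M'' \in \cH_n$ has $M''(i,j)=0$ for all $(i,j)\in S$, with row/column weight vectors $w'',v''$. Define permutations $\pi,\rho$ of $[n]$ by $\pi(k) = w''_{r_k}$ and $\rho(k) = v''_{c_k}$; showing $M''=M$ amounts to showing $\pi=\rho=\mathrm{id}$. The first layer of $S$ gives $\pi(k)+\rho(n-k)\le n$ for every $k\in[n-1]$. Summing these $n-1$ inequalities and using $\sum_{k=1}^{n-1}\pi(k) = \tfrac{n(n+1)}{2}-\pi(n)$ (and likewise for $\rho$), the left side is at least $n(n+1)-2n = n(n-1)$; since it is also at most $n(n-1)$, every inequality must be an equality and $\pi(n)=\rho(n)=n$. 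Hence $\rho(b) = n-\pi(n-b)$ for all $b\in[n-1]$. Feeding this into the second layer, the constraint $\pi(k)+\rho(n-1-k)\le n$ for $k\in[n-2]$ becomes $\pi(k)\le\pi(k+1)$ for $k\in[n-2]$; so $\pi$ is nondecreasing on $[n-1]$, and being a permutation with $\pi(n)=n$ it restricts to the identity there. Thus $\pi=\mathrm{id}$, whence $\rho=\mathrm{id}$, so $w''=w$, $v''=v$, and $M''=M$.

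The step I expect to be the main obstacle is shaving the certificate down to $O(n)$. The naive induction---peel off the all-$1$ row and the weight-$1$ column of $M$, apply the claim to the remaining $(n-1)\times(n-1)$ submatrix, and add a few entries to recover the two removed lines---is clean but forces one to reveal roughly $n$ zero-entries per level (essentially all of the weight-$1$ column, in order to certify which column it is), giving only an $O(n^2)$ bound. The reformulation in terms of weights, combined with the observation that a single anti-diagonal layer of zero-entries already forces \emph{all} of its constraints to be tight---and hence links $\rho$ to $\pi$---is what brings the count down to a constant number of layers.
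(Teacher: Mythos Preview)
Your proof is correct. In fact, after unwinding your weight-based coordinates, your certificate $S$ is \emph{exactly} the paper's: in $L_n$ we have $r_k=k$ and $c_k=n+1-k$, so $(r_k,c_{n-k})=(k,k+1)$ and $(r_k,c_{n-1-k})=(k,k+2)$, recovering the paper's two superdiagonals $\{(i,j):j\in\{i+1,i+2\}\}$.

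The difference is in how uniqueness is established. The paper argues by a greedy peeling induction: at each step it identifies the next column (as the unique un-fixed column with the fewest zeros), fixes it, then identifies the next row (as the unique un-fixed row with the fewest ones), and so on. Your argument is algebraic: you sum the $n-1$ constraints from the first layer, observe the sum is forced to be tight, hence every constraint is tight and $\pi(n)=\rho(n)=n$; this expresses $\rho$ in terms of $\pi$, and the second layer then collapses to $\pi(k)\le\pi(k+1)$. Your approach is cleaner and explains in one stroke why precisely two diagonals suffice (one to link $\rho$ to $\pi$, one to force monotonicity); the paper's induction is more hands-on but perhaps easier to generalize to other certificate shapes. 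Note that your closing remark about ``naive induction'' requiring $\Theta(n^2)$ entries describes a different induction than the paper's --- the paper's peeling argument does achieve $O(n)$, just less transparently than yours.
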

We remark that the same statement also holds true if $S$ is a subset of the 1-entries. The proof is extremely similar to this one, and we omit it.
\begin{proof}
    Without loss of generality (renaming rows and columns suitably), we may assume that $M = L_n$. Define $S = \cbra{(i, j) : i, j \in [n] \textnormal{ and } j \in \cbra{i+1, i+2}}$. Clearly, all of these entries are 0 in $M$, and $|S| = O(n)$. Equation~\eqref{eq:example} shows the entries of $S$ when $n = 6$.
    \begin{equation}\label{eq:example}
\begin{blockarray}{ccccccc}
~ & 1 & 2 & 3 & 4 & 5 & 6\\
\begin{block}{c(cccccc)}
  1 & ~ & 0 & 0 & ~ & ~ & ~\\
  2 & ~ & ~ & 0 & 0 & ~ & ~\\
  3 & ~ & ~ & ~ & 0 & 0 & ~\\
  4 & ~ & ~ & ~ & ~ & 0 & 0\\
  5 & ~ & ~ & ~ & ~ & ~ & 0\\
  6 & ~ & ~ & ~ & ~ & ~ & ~\\
\end{block}
\end{blockarray}
\end{equation}
In the remainder of the proof, we show that the identity permutations on the rows and columns are the unique row and column permutations of $L_n$ that are consistent with the 0s in $S$. We prove this by giving an inductive process that proceeds one row and one column at a time from top to bottom and left to right. During this process we ensure that each row and column must be filled uniquely, and this must be consistent with $L_n$. The first step is as follows: 
\begin{itemize}
    \item Observe that every column other than the first one has at least one 0. Thus, the first column must be the all-1 column.
\end{itemize}

For each $i \in \cbra{2, \dots, n}$, Step $i$ is the following, assuming that we've ascertained so far that the first $i-1$ columns and the first $i - 2$ rows are consistent with $L_n$. We treat the first $i-1$ columns and first $i-2$ rows to be fixed.
\begin{itemize}
    \item Column $i$ has its first $i-1$ entries fixed to 0. All other un-fixed columns have exactly $i$ 0s at this point. Thus, Column $i$ is the only un-fixed column that has at most $i-1$ 0s. Thus, Column $i$ must be consistent with the $i$'th column in $L_n$, and the rest of its entries must all be 1.
    \item Now, the first $i-1$ entries in Row $i - 1$ are fixed to 1 at this point. Every other un-fixed row has exactly $i$ 1s. Thus, Row $i$ in the only un-fixed row that has at most $i - 1$ 1s. Thus, Row $i - 1$ must be consistent with the $i-1$'th row in $L_n$, and the rest of its entries must all be 0.
\end{itemize} 
Putting things together, this means that $L_n$ is the unique matrix that is consistent with the 0s in $S$. For convenience, we include a pictorial description of one step of the inductive process with $n = 6$ and $i = 3$ in Figure~\ref{fig:example}.
\begin{figure}[hbt]
    \centering
\begin{blockarray}{ccccccc}
~ & 1 & 2 & 3 & 4 & 5 & 6\\
\begin{block}{c(cccccc)}
  1 & 1 & 0 & 0 & 0 & 0 & 0\\
  2 & 1 & 1 & 0 & 0 & \textcolor{blue}{0} & \textcolor{blue}{0}\\
  3 & 1 & 1 & \textcolor{red}{1} & 0 & 0 & ~\\
  4 & 1 & 1 & \textcolor{red}{1} & ~ & 0 & 0\\
  5 & 1 & 1 & \textcolor{red}{1} & ~ & ~ & 0\\
  6 & 1 & 1 & \textcolor{red}{1} & ~ & ~ & ~\\
\end{block}
\end{blockarray}
    \caption{Depiction of inductive process in the Proof of Lemma~\ref{lem: inductive proof} with $n = 6$ and $i = 3$. The entries colored in black are those we've deduced after the first two steps, fixing Column 1, Column 2, and Row 1. Entries colored in \textcolor{red}{red} are those fixed first in this step, and entries colored in \textcolor{blue}{blue} are those fixed second in this step. The reasoning for fixing the \textcolor{red}{red} entries is as follows: Column 3 is now the only un-fixed column with at most two 0s. This means Column 3 is in its correct place, and the remaining entries of Column 3 must be 1. The reasoning for fixing the \textcolor{blue}{blue} entries is: Row 2 is now the only row with two 1s, implying that Row 2 is in the correct place.}
    \label{fig:example}
\end{figure}
\end{proof}
\end{document}